\documentclass[11pt]{article}

\usepackage[a4paper,margin=1in]{geometry}
\usepackage{amsmath,amssymb,amsthm,mathtools}
\usepackage{microtype}
\usepackage{xcolor}
\usepackage{booktabs}
\usepackage{tabularx}
\usepackage{algorithm}
\usepackage[noend]{algpseudocode}
\usepackage{tikz}
\usetikzlibrary{arrows.meta,calc,fit,positioning}
\usepackage[
  colorlinks=true,
  linkcolor=blue!55!black,
  citecolor=blue!55!black,
  urlcolor=blue!55!black
]{hyperref}
\usepackage{authblk}

\newtheorem{theorem}{Theorem}[section]
\newtheorem{lemma}[theorem]{Lemma}

\theoremstyle{definition}

\newtheorem{remark}[theorem]{Remark}

\newcommand{\ALG}{\operatorname{ALG}}
\newcommand{\OPT}{\operatorname{OPT}}
\newcommand{\BP}{\operatorname{BP}}

\newcommand{\E}{\mathbb{E}}
\newcommand{\R}{\mathbb{R}}
\newcommand{\vol}{\operatorname{vol}}
\newcommand{\spanof}{\operatorname{span}}

\title{Online Bin Packing with Per-Bin Maximum Delay}
\author{
Tianhang Lu,
Runtian Ren,
Shengcai Liu
}
\affil{
Guangdong Provincial Key Laboratory of Brain-Inspired Intelligent Computation,\\
Department of Computer Science and Engineering,\\
Southern University of Science and Technology, Shenzhen 518055, China\\
liusc3@sustech.edu.cn
}
\date{}

\begin{document}

\maketitle

\begin{abstract}
We study online bin packing with per-bin maximum delay: each sealed bin
incurs a unit opening cost plus the longest waiting time among its items.
Offline, this becomes a temporal-span packing objective.  We prove strong
NP-hardness and rule out absolute approximation factors below three halves
unless P equals NP.  We complement these barriers with a polynomial
constant-factor approximation, exact algorithms for several special cases,
and an AFPTAS for the fixed-length weighted endpoint subproblem.

For adversarial online inputs, we give an efficient randomized algorithm in
the ideal random-real model with expected competitive ratio about 2.418
against an oblivious adversary, together with deterministic and randomized
lower bounds.  Its analysis couples Next-Fit fragmentation to a negative
credit generated by silence clusters.  Both frontiers are solved exactly
when capacity is nonbinding.  We also obtain an exact
stochastic benchmark for Poisson arrivals when every item has half the bin
capacity: we characterize the offline rate, identify an optimal causal
policy, and show that two natural asymptotic ratio notions agree and are
bounded by four thirds.
\end{abstract}

\section{Introduction}
\label{sec:introduction}

Online bin packing formalizes a spatial consolidation decision: items arrive
without knowledge of the future, and the algorithm tries to use few
unit-capacity bins.  Online optimization with delay adds a temporal decision:
waiting may create a better batch, but postponed demand becomes more
expensive.  Existing bin-packing-with-delay models charge waiting additively
over items~\cite{ahlroth2013,azar2019clustering,epstein2021}.  That objective
is appropriate when every item's latency contributes separately.

We study a different regime in which latency is charged once per dispatched
bin.  If a bin $B$ is sealed at time $t$, its cost is
\[
  1+\max_{i\in B}(t-a_i),
\]
where $a_i$ is the arrival time of item $i$.  Once the oldest item fixes the
delay of a bin, any younger compatible items may join it for free.  This
models shipments, packets, and processing batches whose service-level charge
is determined by their oldest member rather than by total waiting.
Without capacity constraints, this is precisely TCP acknowledgment with the
maximum waiting time charged separately to each batch.  Dooly, Goldman, and
Scott studied that objective and gave a deterministic factor two; the recent
general-delay framework of Bhore, Paw{\l}owski, and Umboh places it in the
sum-monotone batch-aware class~\cite{dooly2001tcp,bhore2026general}.  Our
starting point is the capacity-constrained extension with heterogeneous item
sizes.  To the best of our knowledge, this bin-packing version and its
temporal-span offline objective have not been studied before.

The change from additive to maximum delay is structural.  Offline, the
problem becomes a partition problem with a temporal diameter cost.  Optimal
bins need not be consecutive in arrival order and may cross in time, so the
usual interval uncrossing is false.  Online, a decision must simultaneously
guess whether future items will be capacity-compatible and whether waiting
for them is worth an additional setup.  A temporal batching algorithm that
ignores sizes can leave many incompatible old items waiting through a long
busy period; a spatial packing that ignores time can scatter one short time
window across many old bins.

Our main algorithm resolves these two hazards with a single frontier.  It
samples a global threshold $X\in[0,1]$ with density $e^x/(e-1)$, maintains one
Next-Fit buffer, resets its timer after every fitting arrival, and seals it
either after $X$ units of silence or when the next item does not fit.  For a
fixed $X$, the algorithm's temporal cost is exactly the cost of
uncapacitated random-silence batching; every capacity overflow adds only one
opening.  The central proof does not bound those two terms independently.
Instead it subtracts two units per silence cluster from the Next-Fit fragment
bound, leaving a signed potential whose expectation is only
$(e-2)/(e-1)$ times a temporal lower bound.

This perspective leads to three connected questions: what static geometry
does the objective induce, how much does binding capacity cost under
adversarial arrivals, and which capacity-binding regimes remain exactly
solvable?  The first supplies the common comparator, the second contains our
main competitive results, and the third provides an exact benchmark for the
interaction that the worst-case bounds do not yet resolve.

\paragraph{Our contributions.}
Our results fall into three connected pillars, summarized in
Table~\ref{tab:frontier}.

\begin{enumerate}
  \item \emph{Offline geometry and approximation.}
  We prove a last-arrival normal form that turns the clairvoyant problem into
  temporal-span bin packing,
  \[
    \OPT=\min_{\mathcal P}\sum_{B\in\mathcal P}
      \bigl(1+\max_{i\in B}a_i-\min_{i\in B}a_i\bigr),
  \]
  and derive a cut-congestion identity with a sharp decomposition at gaps of
  length at least one.  The problem is strongly NP-hard and admits no
  absolute approximation ratio below $3/2$ unless $\mathrm P=\mathrm{NP}$.
  We give a polynomial $5(2+\sqrt2)/6$-approximation, polynomial exact
  algorithms for equal sizes, nonbinding capacity, and bin cardinality at
  most two, and an $O(3^n)$ exact algorithm for arbitrary instances.

  \item \emph{Adversarial online algorithms and the cost of capacity.}
  When capacity is nonbinding, the exact deterministic and
  oblivious-randomized ratios are $2$ and $e/(e-1)$; these give lower bounds
  for the full model.  Under the ideal random-real convention, the
  Capacity-Overflow Random-Silence algorithm is polynomial and has expected
  ratio $(3e-4)/(e-1)\approx2.418023$.  A simpler deterministic event-driven
  algorithm is $4$-competitive.  Exact endpoint-cost phase optimization gives
  ratios $(3+\sqrt5)/2$ deterministically and $1+1/\sqrt2$ randomly, while a
  layer-cake reduction gives polynomial phase implementations.  For every
  fixed phase length, we close the local phase oracle with an AFPTAS for
  weighted endpoint packing.

  \item \emph{An exact binding-capacity benchmark.}
  For rate-$\lambda$ Poisson arrivals with all sizes equal to $1/2$, an
  adjacent-matching recursion gives the exact offline rate, an immediate-pair
  normalization reduces arbitrary causal policies to pair-or-timeout form,
  and a horizon-free timer sequence attains the optimal causal rate.  Ratio
  of expectations and conditional expected ratio have the same limit, whose
  maximum is $4/3$.
\end{enumerate}

\begin{table}[t]
  \centering
  \small
  \begin{tabularx}{\textwidth}{@{}
    >{\raggedright\arraybackslash}p{0.29\textwidth}
    X
    >{\raggedright\arraybackslash}p{0.24\textwidth}@{}}
    \toprule
    \multicolumn{3}{@{}l}{\textbf{A. Offline approximation and computation}} \\
    Instance class & Algorithmic result & Offline guarantee \\
    \midrule
    General & Hardness and absolute inapproximability & no ratio $<3/2$ \\
    General & Polynomial approximation & $5(2+\sqrt2)/6\approx2.845$ \\
    Equal sizes, nonbinding, or cardinality at most two
      & Polynomial exact algorithm & exact \\
    Fixed-length weighted endpoint packing & AFPTAS
      & $(1+\varepsilon)\OPT+\operatorname{poly}_L(1/\varepsilon)$ \\
    \addlinespace
    \midrule
    \multicolumn{3}{@{}l}{\textbf{B. Adversarial online competitiveness}} \\
    Model and algorithm class & Competitive comparison & Strict ratio \\
    \midrule
    Capacity-constrained, deterministic & lower / unrestricted upper
      & $2$ / $(3+\sqrt5)/2$ \\
    Capacity-constrained, deterministic polynomial & lower / event-driven upper
      & $2$ / $4$ \\
    Capacity-constrained, randomized & lower / unrestricted upper
      & $e/(e-1)$ / $1+1/\sqrt2$ \\
    Capacity-constrained, randomized ideal-real polynomial & lower / CORS upper
      & $e/(e-1)$ / $(3e-4)/(e-1)$ \\
    Nonbinding capacity & Exact deterministic / randomized frontiers
      & $2$ / $e/(e-1)$ \\
    \addlinespace
    \midrule
    \multicolumn{3}{@{}l}{\textbf{C. Poisson half-size stochastic benchmark}} \\
    Quantity & Status & Long-run value \\
    \midrule
    Offline cost rate & Exact
      & $\lambda(\lambda+2)/(2(\lambda+1))$ \\
    Optimal causal online rate & Exact
      & $\lambda$ below rate one; $(\lambda+1)/2$ above \\
    Ratio of expectations and conditional expected ratio
      & Common asymptotic limit & at most $4/3$ \\
    \bottomrule
  \end{tabularx}
  \caption{Results grouped by performance metric.  Panel A concerns offline
  approximation, Panel B strict adversarial competitive ratios, and Panel C
  long-run stochastic rates.  Randomized adversarial guarantees are against
  an oblivious adversary.}
  \label{tab:frontier}
\end{table}

\paragraph{Related work.}
The classical bin-packing literature distinguishes absolute from asymptotic
guarantees.  The optimal absolute competitive ratio for deterministic
ordinary online bin packing is $5/3$~\cite{balogh2019}; First Fit Decreasing
has absolute approximation ratio $3/2$, which is best possible unless
$\mathrm P=\mathrm{NP}$~\cite{simchilevi1994}.  Configuration-LP pricing and
linear grouping originate in the classical asymptotic schemes for bin
packing~\cite{karmarkar1982}; we use a knapsack FPTAS~\cite{lawler1979} to
preserve the additional nested age constraints.  Ahlroth, Schumacher, and
Orponen introduced online bin packing with item-additive delay and holding
costs~\cite{ahlroth2013}.  Azar et al. connected additive-delay bin packing
to the price of clustering~\cite{azar2019clustering}, and Epstein improved
the resulting competitive ratio~\cite{epstein2021}.  Our delay is not
request-additive, so those clustering charges do not transfer.

When capacity never binds, our problem is exactly the per-batch
maximum-delay TCP objective considered by Dooly, Goldman, and Scott; their
Greedy result gives the deterministic factor two~\cite{dooly2001tcp}.
Bhore, Paw{\l}owski, and Umboh recently revisited such batch-aware objectives
within a general delay-cost framework~\cite{bhore2026general}.  
The distribution $e^x/(e-1)$ and the constant $e/(e-1)$ are classical in
randomized online batching~\cite{karlin2001dynamic,seiden2000guessing}.  We give a self-contained exact randomized frontier for this nonbinding class,
then show how much of it survives under bin capacity.

\paragraph{Organization.}
Section~\ref{sec:model} fixes the model, event conventions, and performance
criteria.  Section~\ref{sec:offline} develops the shared offline
temporal-span structure.  Section~\ref{sec:lower-bounds} then solves the
nonbinding temporal baseline and derives the lower bounds inherited by the
capacity-constrained problem.  Against this baseline,
Section~\ref{sec:randomized-event} proves the main randomized upper bound,
and Section~\ref{sec:deterministic} gives a simple deterministic
polynomial-time guarantee.  Sections~\ref{sec:phase} and
\ref{sec:weighted-afptas} develop the alternative phase-decomposition
program and close its fixed-phase optimization oracle with an AFPTAS.
Section~\ref{sec:stochastic-half} uses Poisson half-size arrivals as an exact
benchmark in which capacity genuinely binds.  The appendices contain
deferred proofs and the exact-computation and falsification framework used
to test structural claims and candidate algorithms.

\section{Model and Preliminaries}
\label{sec:model}

A finite input $I$ consists of distinct item occurrences
$i=(s_i,a_i)$, where $s_i\in(0,1]$ is a size and $a_i\in\R_{\ge0}$ is an
arrival time.  Equal sizes and equal arrival times are allowed, but item
identities remain distinct.  All items at one epoch are revealed before any
action at that epoch.

At time $t$, an algorithm may perform a finite ordered trace of instantaneous
service actions.  Each action chooses a nonempty set $B$ of currently pending
items with $\sum_{i\in B}s_i\le1$, assigns it irrevocably to a fresh bin, and
seals that bin.  The complete arrival epoch is revealed before this trace
begins, and an item served by one action is no longer pending for later
actions in the same trace.  Each action costs
\begin{equation}
  \label{eq:bin-cost}
  c(B,t)=1+\max_{i\in B}(t-a_i).
\end{equation}
Every item must eventually be packed.  Although the input is finite, the
algorithm receives no end-of-input notification.  On the empty input we set
both online and offline costs to zero.

\begin{figure}[H]
  \centering
  \begin{tikzpicture}[
      x=1.18cm,y=0.9cm,
      arrival/.style={circle,draw=blue!65!black,fill=blue!10,inner sep=1.7pt},
      >=Latex]
    \draw[->,thick] (0,0) -- (7,0) node[right] {$t$};
    \foreach \x/\lab in {0.8/$a_1$,2.25/$a_2$,3.65/$a_3$} {
      \draw (\x,0.08)--(\x,-0.08);
      \node[arrival,above=3pt] at (\x,0) {};
      \node[below=5pt] at (\x,0) {\lab};
    }
    \draw[very thick,red!70!black] (5.55,-0.22)--(5.55,0.72);
    \node[above,red!70!black] at (5.55,0.72) {seal at $t_B$};
    \foreach \x in {0.8,2.25,3.65}
      \draw[densely dashed,gray!75,->] (\x,0.48)--(5.48,0.48);
    \node[draw,rounded corners,fill=green!8,align=center]
      at (3.5,-1.48)
      {$c(B,t_B)=\underbrace{1}_{\text{one setup}}
       +\underbrace{\max_{i\in B}(t_B-a_i)}_{\text{one oldest-item delay}}$};
  \end{tikzpicture}
  \caption{A bin is charged once for service and once for the delay of its
  oldest item.  The younger items do not contribute additional waiting
  charges.}
  \label{fig:model-cost}
\end{figure}

Let $\OPT(I)$ be the minimum clairvoyant cost.  A deterministic algorithm is
strictly $\rho$-competitive if $\ALG(I)\le\rho\OPT(I)$ for every finite
input.  A randomized algorithm is strictly $\rho$-competitive against an
oblivious adversary if, for every fixed finite input $I$ independent of its
seed,
\[
  \E[\ALG(I)]\le \rho\OPT(I).
\]
Thus all randomized expectations below are over the algorithm's seed after
the input has been fixed.  We separate computationally unrestricted
algorithms, which may solve a strongly NP-hard packing problem at a service
epoch, from polynomial-time algorithms.  For algorithms that sample
continuous thresholds, polynomial time is in the standard ideal random-real
model: exact sampling, real arithmetic, and real comparisons are unit-cost
operations.  A bit-model implementation would require a separate
finite-precision or lazy-sampling statement.
Unless an ideal-real convention is explicitly invoked, every polynomial-time
claim assumes that item sizes, arrival times, ages, and requested accuracies
are binary-encoded rationals.  Fixed algebraic constants of bounded degree
are handled in their exact ordered fields when stated.

Formally, the master private seed $U$ is sampled from a standard Borel
probability space independently of the input.  Finite labelled histories and
finite action traces carry their natural standard Borel structures.  A
randomized causal policy is jointly measurable in $(U,h)$, and after $U=u$
is fixed it is a deterministic causal policy whose service times are stopping
times for the arrival history.  We require pathwise admissibility for every
seed under consideration; changing a policy on a null seed set to a fixed
safe fallback is harmless.  Pathwise policy transformations below are
applied seed by seed, and expectations are taken only after the transformed
executions and their joint measurability have been established.

For a nonempty item set $B$, write
\[
  \spanof(B)=\max_{i\in B}a_i-\min_{i\in B}a_i,
  \qquad
  \vol(B)=\sum_{i\in B}s_i.
\]
For an ordinary bin-packing instance $S$, let $\BP(S)$ be its minimum number
of unit-capacity bins.  The total input volume $V=\sum_i s_i$ satisfies
$V\le\BP(I)\le\OPT(I)$.

We use the following arrival-first rule throughout.  If an arrival epoch is
tied with a scheduled timer or grid boundary, reveal the complete epoch
first, process its items in a fixed identity order when an algorithm needs a
sequential scan, and only then execute any action still due.  Continuous
random thresholds make a fixed positive-gap tie a probability-zero event,
but the convention is required for deterministic statements and
simultaneous arrivals.

\section{Offline Temporal-Span Bin Packing}
\label{sec:offline}

The offline problem has a static partition formulation.

\begin{lemma}[Last-arrival normal form]
  \label{lem:last-arrival}
  There is an offline optimum that seals every bin at the last arrival of an
  item assigned to that bin.  Consequently,
  \begin{equation}
    \label{eq:offline-partition}
    \OPT(I)=\min_{\mathcal P}
       \sum_{B\in\mathcal P}\bigl(1+\spanof(B)\bigr),
  \end{equation}
  where the minimum ranges over capacity-feasible partitions of the items.
\end{lemma}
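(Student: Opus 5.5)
We prove the two inequalities in \eqref{eq:offline-partition} separately; the normal-form claim comes out of the second one.

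For the lower bound, take any offline (clairvoyant) schedule. Each action seals a set $B$ at some time $t$. Every item of $B$ must already have arrived, so $t\ge\max_{i\in B}a_i$. Hence
\[
  c(B,t)=1+t-\min_{i\in B}a_i\ \ge\ 1+\max_{i\in B}a_i-\min_{i\in B}a_i=1+\spanof(B).
\]
Every item is packed exactly once, and each action respects capacity, so the sealed sets form a capacity-feasible partition $\mathcal P$ of $I$. Summing over actions shows that the schedule costs at least $\sum_{B\in\mathcal P}(1+\spanof(B))$, which is at least the right-hand side of \eqref{eq:offline-partition}. Therefore $\OPT(I)$ is at least that minimum.

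For the upper bound, take any capacity-feasible partition $\mathcal P$ and build a schedule from it. Each bin $B$ is sealed at time $t_B=\max_{i\in B}a_i$.
\begin{itemize}
  \item At $t_B$ all items of $B$ have arrived; under the model's conventions the whole epoch is revealed before the action trace begins, so all of $B$ is pending.
  \item Distinct bins are disjoint, so several bins sealed at the same epoch can be listed in any order within one finite trace.
  \item The cost of $B$ is exactly $1+\spanof(B)$.
\end{itemize}
This gives a valid offline schedule of cost $\sum_{B}(1+\spanof(B))$, so $\OPT(I)$ is at most the minimum over partitions.

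The minimum over partitions is attained, because the input is finite and so there are finitely many partitions. The schedule built from a minimizing partition is an offline optimum that seals every bin at its last arrival, which is the normal-form claim.

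The only delicate point is tied arrival epochs: a bin whose last item arrives at time $t$ must be able to be sealed at $t$ itself. The model's convention that the complete epoch is revealed before any action is taken settles this. The empty input is trivial, since both sides are zero.
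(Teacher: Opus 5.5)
Your proof is correct and follows essentially the same argument as the paper. The paper moves each bin's service time back to its latest arrival, noting that feasibility is preserved and the maximum delay weakly decreases; this is your observation $t\ge\max_{i\in B}a_i$ combined with your construction. Your two-inequality presentation also makes explicit that the minimum over the finitely many partitions is attained, so an offline optimum exists, which the paper takes for granted.
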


\begin{proof}
  In any feasible schedule, move the service time of a bin backward to the
  latest arrival assigned to it.  Every assigned item has arrived, capacity
  is unchanged, and the maximum delay weakly decreases.  Applying this to
  every bin proves the claim and~\eqref{eq:offline-partition}.
\end{proof}

Sort occurrences by nondecreasing arrival time, using the fixed identity
order for ties, and write $g_j=a_{j+1}-a_j$.  For a partition $\mathcal P$,
let $N_j(\mathcal P)$ be the number of bins containing an item on each side
of cut $j$.

\begin{lemma}[Cut congestion]
  \label{lem:cut-congestion}
  Every feasible partition satisfies
  \begin{equation}
    \label{eq:cut-congestion}
    \sum_{B\in\mathcal P}(1+\spanof(B))
    =|\mathcal P|+\sum_{j=1}^{n-1}g_jN_j(\mathcal P).
  \end{equation}
\end{lemma}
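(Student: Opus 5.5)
The identity is a double-counting statement, so the plan is to prove it bin by bin and then sum over $\mathcal P$. Fix the sorted order of occurrences $1,\dots,n$ (nondecreasing arrival, ties by identity), so that $a_1\le\dots\le a_n$ and $g_j=a_{j+1}-a_j\ge0$. For a nonempty bin $B$, let $\ell(B)$ and $r(B)$ be the smallest and largest positions of its items in this order. Then
\[
  \spanof(B)=a_{r(B)}-a_{\ell(B)}=\sum_{j=\ell(B)}^{r(B)-1}g_j,
\]
by telescoping. This holds because the minimum and maximum arrival times of $B$ are attained at its first and last positions in the sorted order, including when there are ties.

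Next I characterize which cuts $B$ crosses. Cut $j$ separates positions $\{1,\dots,j\}$ from $\{j+1,\dots,n\}$, and $B$ has an item on each side exactly when $\ell(B)\le j<r(B)$. Writing $\mathbf 1[\cdot]$ for the indicator, this gives
\[
  \spanof(B)=\sum_{j=1}^{n-1}g_j\,\mathbf 1[\ell(B)\le j<r(B)]
\]
for every bin, with no cuts crossed when $|B|=1$.

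Finally I sum over $B\in\mathcal P$ and exchange the two finite sums. The inner count $\sum_{B\in\mathcal P}\mathbf 1[\ell(B)\le j<r(B)]$ is by definition $N_j(\mathcal P)$, and the constants $1$ contribute $|\mathcal P|$. This yields~\eqref{eq:cut-congestion}. Capacity feasibility plays no role; the identity holds for any partition.

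The only point needing care is the treatment of ties. Zero gaps contribute nothing regardless of how $N_j$ counts them, and since $N_j$ is defined with respect to the fixed sorted order (so "side of cut $j$" means position, not time), the positional definitions are consistent. I therefore expect no real obstacle beyond stating the sorted-order convention explicitly.
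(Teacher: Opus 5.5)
Your proposal is correct and follows the paper's own argument: the paper likewise writes each bin's span as the sum of the consecutive gaps lying between its first and last sorted positions, then interchanges the two finite sums to obtain $N_j(\mathcal P)$. Your explicit handling of the positional tie convention and your remark that capacity feasibility is never used are both accurate.
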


\begin{proof}
  The span of a bin is the sum of exactly those consecutive arrival gaps
  crossed by its temporal hull.  Interchanging the two finite sums gives
  \eqref{eq:cut-congestion}.
\end{proof}

Splitting every bin at a cut increases the setup term by at most one per
crossing bin and removes $g_j$ per crossing bin.  Hence an optimum can be
chosen not to cross a gap $g_j\ge1$, and no optimum crosses a gap $g_j>1$.
This decomposition is sharp at equality.
The exact split calculation is given in Appendix~\ref{app:offline-phase}.

\subsection{Complexity and approximation}

If all items arrive simultaneously, every span is zero and
\eqref{eq:offline-partition} is ordinary bin packing.  Strong NP-hardness
therefore follows from \emph{3-Partition}; an absolute approximation ratio
strictly below $3/2$ would contradict the corresponding ordinary
bin-packing threshold unless $\mathrm P=\mathrm{NP}$.
Appendix~\ref{app:offline-phase} gives the explicit simultaneous-arrival
reductions, including the two-bin-versus-three proof of the absolute barrier.

The randomized phase construction in Section~\ref{sec:phase}, combined with
the absolute $5/3$ ordinary online bin-packing algorithm, can be derandomized
offline by enumerating the finitely many phase-membership intervals of the
shift.  After normalizing every produced bin to its own last arrival, this
gives a deterministic polynomial approximation ratio
\[
  \frac53\left(1+\frac1{\sqrt2}\right)
  =\frac{5(2+\sqrt2)}6
  \approx2.845178.
\]
The shift-cell enumeration and exact-arithmetic derandomization are proved in
Appendix~\ref{app:offline-phase}.

\subsection{Exact special cases}

For every nonempty input whose total size is at most one, an optimum
partitions the sorted arrivals
at every gap greater than one and nowhere else, so
\begin{equation}
  \label{eq:one-bin-opt}
  \OPT(I)=1+\sum_{j=1}^{n-1}\min\{g_j,1\}.
\end{equation}
For equal item sizes, an uncrossing argument gives an optimum made of
consecutive arrival blocks; the resulting recurrence has a monotone-window
implementation running in $O(n)$ time after sorting, and hence in
$O(n\log n)$ total time.  If no feasible bin contains three
items, a pair $(i,j)$ creates saving
$\max\{0,1-|a_i-a_j|\}$ relative to two singletons.  The problem is therefore
maximum-weight matching.  For arbitrary instances, a labelled-subset dynamic
program evaluates~\eqref{eq:offline-partition} in $O(3^n)$ time and
$O(2^n)$ space.
Complete exchange arguments and recurrences for these special cases appear
in Appendix~\ref{app:offline-phase}.

\section{The Exact Temporal Baseline and General Lower Bounds}
\label{sec:lower-bounds}

This section isolates the case in which all items together fit in one bin.
It supplies two ingredients for the capacity-constrained theory: lower bounds
inherited by the full model and the optimal random-silence threshold law that
Section~\ref{sec:randomized-event} reuses while controlling spatial
fragmentation.

When the total size is at most one, the following normalization permits us to
reason about temporal batches rather than arbitrary retained subsets.

\begin{lemma}[Oldest-trigger normalization]
\label{lem:one-bin-normalization}
For every online policy $A$, and pathwise for every fixed seed when $A$ is
randomized, there is a causal policy $A'$ that clears all pending items
whenever it acts and satisfies $A'(I)\le A(I)$ on every input whose total
size is at most one.
\end{lemma}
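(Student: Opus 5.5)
The plan is to construct $A'$ by simulation.  Fix a seed, so that $A$ is deterministic.  $A'$ runs $A$ internally on the same arrival history and acts at exactly the epochs at which $A$ performs an action that serves an item which, in the simulation, is the oldest pending item of $A$.  Call these \emph{oldest-trigger} epochs.  At such an epoch $A'$ seals a single bin containing every item it currently holds.  Since the total size is at most one, this bin is capacity-feasible on the inputs in question.  On other inputs $A'$ can fall back to any causal safe rule, because the claim only concerns inputs of total size at most one.  Causality is immediate: the trigger depends only on the simulated run of $A$ up to the current time, and the simulation is a deterministic function of the history.  For a randomized $A$, the map $u\mapsto A'_u$ inherits joint measurability from $A$.

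The key step is the cost comparison.  We charge each bin of $A'$ to the action of $A$ that triggered it.  Suppose $A'$ acts at time $t$, and let $o$ be the oldest item $A'$ holds at that moment.  We claim $o$ is still pending in $A$ just before $t$'s trace, or was served by $A$ exactly in this trigger.  The set pending in $A'$ is always a subset of the set pending in $A$ at the previous trigger union later arrivals, by induction over trigger epochs.  Indeed, at each trigger $A'$ clears everything, while $A$ by definition serves its oldest pending item at that moment.  More carefully, let $p$ be the oldest pending item of $A$ just before the trigger.  Every item of $A'$ arrived after the previous trigger time $t'$, or was pending in $A'$ right after $t'$; but $A'$ holds nothing right after $t'$.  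Hence all items of $A'$ arrived after $t'$, i.e.\ in $(t',t]$ (ties handled by the arrival-first rule).  The item $p$ either arrived in $(t',t]$, in which case $p$ is at least as old as $o$ and $a_p\le a_o$, or arrived earlier, in which case again $a_p\le a_o$.  Therefore the bin of $A$ that serves $p$ at time $t$ costs at least $1+(t-a_p)\ge 1+(t-a_o)$, which equals the cost of $A'$'s bin.

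It remains to check that the charging is injective and complete.  It is injective because distinct triggers are distinct actions of $A$, and several triggers within one instantaneous trace at the same epoch are handled by letting only the first act; the rest find $A'$ empty.  It is complete because every item is eventually served by $A$: at the time $A$ serves the last of the items pending in $A'$, some action serves $A$'s oldest pending item no later than that.  So $A'$ eventually clears, and every nonempty bin of $A'$ has a distinct charged bin of $A$ of at least its cost.  Summing gives $A'(I)\le A(I)$.

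The main obstacle is the completeness and timing argument.  We must show that $A'$ never holds an item forever and that each $A'$ bin corresponds to a genuine oldest-serving action of $A$ at the same time, including the subtleties of simultaneous arrivals and multi-action traces.  I would settle this by induction on trigger epochs, using the arrival-first convention.
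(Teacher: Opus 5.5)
Your trigger differs from the paper's: you fire when $A$ serves \emph{its own} oldest pending item, while the paper fires when the shadow run first packs the oldest item still held by $A'$. With your rule the inequality $a_p\le a_o$ is true, but your argument for it does not work.

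First, the claim that $o$ is still pending in $A$ just before $t$, or is served in this trigger, is false. Let $w,z,o$ have size $1/3$ and arrive at $0$, $0.1$, $0.3$, and let $A$ seal $\{w\}$ at $0.2$, $\{o\}$ at $0.3$, and $\{z\}$ at $0.9$. Your triggers are at $0.2$ and $0.9$. So $A'$ holds $o$ until $0.9$, although $A$ served it at $0.3$ by a non-trigger action.

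Second, the first branch of your case split is backwards. If $p$ arrived in $(t',t]$, then $A'$ holds $p$, so minimality of $o$ gives $a_o\le a_p$. What you actually need is $p=o$. The subset statement you invoke is trivially true and says nothing about what $A$ served between triggers.

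Third, your completeness argument fails on the same example: $A$ serves the last item held by $A'$ at $0.3$, yet the next trigger comes only at $0.9$.

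The missing ingredient is an invariant: between consecutive triggers, the oldest pending item of $A$ cannot change. A non-trigger action by definition does not serve it, and later arrivals occur at strictly later epochs, so they are younger. After the trace at $t'$ there are two cases.
\begin{itemize}
\item If $A$ is nonempty, its oldest item arrived by $t'$ and remains its oldest until it is served at $t$. This gives $a_p\le t'<a_o$.
\item If $A$ is empty, then from the next arrival epoch on its oldest item is the first item of that epoch in identity order, which is exactly $o$. It stays pending in $A$ until the trigger at $t$, so $p=o$.
\end{itemize}
The same invariant gives completeness. The oldest item of $A$ at epoch $a_x$ must eventually be served by admissibility, and that service is a trigger at or after $a_x$. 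With this lemma your route goes through.

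The paper avoids the invariant altogether. Its trigger is the shadow service of $A'$'s current oldest item $q$, so the charged shadow bin contains $q$ itself and $1+t-a_q$ is bounded by that bin's cost directly. Completeness is immediate because $A$ must eventually pack $q$.

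Finally, your fallback must be triggered causally, e.g.\ when the pending volume of $A'$ first exceeds one, since $A'$ cannot know the total input size in advance.
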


\begin{proof}
  First extend the construction to a policy on the full input domain.  If its
  pending volume ever exceeds one, abandon the shadow transformation, pack
  all currently pending items feasibly (singletons suffice), and thereafter
  use the singleton policy.  This fallback is causal and is never reached on
  the promised class.

  Shadow-simulate $A$.  Let $q$ be the oldest item pending for $A'$.  When the
shadow policy first packs $q$, at time $t$, policy $A'$ packs every item then
pending.  This is feasible by the total-size assumption, and its cost
$1+t-a_q$ is no greater than the shadow bin containing $q$.  Different
actions of $A'$ charge different shadow bins.  Items served earlier in the
shadow remain physically pending until their oldest trigger, so the
construction is causal and eventually serves every item whenever $A$ does.
Summing the injective charges proves the claim.  The first-exceed fallback
event and the first shadow service containing the current oldest labelled
item are measurable stopping-time selections.  Induction over the finitely
many identities therefore makes $A'$ jointly measurable in $(U,h)$ under
the convention of Section~\ref{sec:model}.
\end{proof}

The exact offline value is~\eqref{eq:one-bin-opt}.

\begin{theorem}
  \label{thm:det-lb}
  Every deterministic algorithm has competitive ratio at least two, even
  when all items together fit in one bin.  Starting a nonresetting unit timer
  whenever the system becomes nonempty and clearing all pending items when it
  fires is $2$-competitive on this class.
\end{theorem}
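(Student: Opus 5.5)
The proof has two parts: a lower bound from an adaptive adversary, and an upper bound obtained by charging the timer algorithm against the one-bin optimum~\eqref{eq:one-bin-opt}.

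\textbf{Lower bound.} By Lemma~\ref{lem:one-bin-normalization} I may assume the deterministic algorithm clears all pending items whenever it acts. The adversary releases tiny items, say of size $\varepsilon$, so the total size stays at most one. One item arrives at time $0$. Since the algorithm is deterministic, it has a well-defined clearing time $\tau\in[0,\infty]$ if nothing else arrives, and $\tau<\infty$ because every item must eventually be packed. The adversary releases a new item immediately after each clearing, at time $\tau^+$.

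In one such round the algorithm pays $1+\tau$ for its bin. The offline solution can keep both the old and the new item in a single bin, so its marginal cost for the round is $\min\{\tau,1\}$, the gap, plus the setup that the next bin absorbs anyway.

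For a single-round instance (one item, then one more right after the clearing), the costs are $\ALG=(1+\tau)+1$ and $\OPT=1+\min\{\tau,1\}$. This ratio is minimized at $\tau=1$, giving only $3/2$. So I need to iterate $k$ rounds with clearing times $\tau_1,\dots,\tau_k$, which gives
\[
  \ALG=\sum_r(1+\tau_r)+1,\qquad \OPT=1+\sum_r\min\{\tau_r,1\}.
\]
In each round the algorithm pays $1+\tau_r\ge 2\min\{\tau_r,1\}$, since both $\tau_r\ge1$ and $\tau_r<1$ satisfy this. Summing over rounds gives $\ALG\ge 2(\OPT-1)+1$, so the ratio tends to $2$ as $k\to\infty$, provided $\OPT$ grows.

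The main obstacle is the case where $\tau_r\to0$, which keeps $\OPT$ bounded. There the algorithm pays at least $1$ per round while $\OPT$ pays almost nothing, so the ratio is unbounded. I will handle this with a case split: if $\sum_r\min\{\tau_r,1\}$ stays bounded, the ratio is at least $(k+1)/(1+O(1))$, which is large. Otherwise $\OPT\to\infty$ and the ratio tends to $2$.

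\textbf{Upper bound.} The nonresetting unit timer partitions time into service windows. Each window starts at the arrival $a$ of an item into an empty system and ends at the clearing time $a+1$; every item arriving in $[a,a+1]$ is cleared at $a+1$. By the arrival-first rule, an item arriving exactly at $a+1$ is included. Each window costs exactly $2$.

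Consecutive window starts are separated by more than one time unit. The items of window $w$ have arrival times in $[a_w,a_w+1]$, and the next item arrives after $a_w+1$. So each window other than the last is followed by a cut gap ending at the next window start, and the total hull from $a_w$ to $a_{w+1}$ has length greater than one.

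In~\eqref{eq:one-bin-opt}, I charge to window $w$ the setup $1$ if $w$ is first, and otherwise the sum of $\min\{g_j,1\}$ over the gaps between $a_{w-1}$ and $a_w$. Since those gaps sum to more than one, this sum is at least $\min\{\text{total},1\}=1$. Hence $\OPT\ge m$ where $m$ is the number of windows, while $\ALG=2m$, which gives the $2$-competitive bound.

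The remaining care is in ties and measurability, which follow from the conventions in Section~\ref{sec:model}.
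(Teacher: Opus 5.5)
Your upper bound is correct and differs only mildly from the paper's. You charge the windows against the closed form \eqref{eq:one-bin-opt}, using that the gaps between consecutive triggers are disjoint, sum to more than one, and satisfy $\min\{x,1\}+\min\{y,1\}\ge\min\{x+y,1\}$. The paper instead assigns each trigger to the offline bin containing it and notes that a bin holding $h$ triggers has span greater than $h-1$. Your lower bound uses the paper's adversary, and your per-round inequalities are right. The final case split, however, does not go through as written.

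\textbf{The gap.} You treat $\tau_1,\tau_2,\ldots$ as a single infinite sequence and split on whether $\sum_r\min\{\tau_r,1\}$ stays bounded as $k\to\infty$. With items of one fixed size $\varepsilon$, the class in the theorem allows at most $1/\varepsilon$ items, so $k$ cannot tend to infinity. You also need that class for \eqref{eq:one-bin-opt} and for Lemma~\ref{lem:one-bin-normalization}. If instead $\varepsilon$ shrinks with $k$, the algorithm sees different sizes from the first arrival on. The $k$-round clearing times are then not prefixes of one sequence, and the dichotomy has no meaning.

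\textbf{The fix.} You need a bound that is uniform over the algorithm's responses, for each fixed number $N$ of items of size $1/N$. Your own inequalities already give it. From $\ALG\ge N$ and $\ALG\ge 2\OPT-1$ (up to the $\Delta$ correction below),
\[
\ALG/\OPT\ge\max\{N/\OPT,\,2-1/\OPT\}\ge 2N/(N+1).
\]
The paper argues equivalently. With $S=\sum_{i<N}\min\{x_i,1\}\le N-1$, it has $\ALG\ge N+S$ and $\OPT\le 1+S+\Delta$, hence $\ALG/\OPT\ge(2N-1)/(N+\Delta)$. Alternatively, geometrically decreasing item sizes would make your single infinite causal sequence legitimate.

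\textbf{A smaller point.} ``Release at $\tau^+$'' must be a release at $\tau+\delta$ with $\delta>0$. An arrival exactly at $\tau$ is revealed before the service and may change it. The accumulated $\Delta=\sum\delta$ then enters $\OPT$ and is sent to zero at the end.
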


The adversarial lower-bound construction and its algebra are given in
Appendix~\ref{app:lower-bounds}.  For the upper bound, when the system becomes
nonempty at a trigger $r_j$, start a nonresetting unit timer.  The algorithm
pays exactly two per action.  Arrival-first processing implies
$r_{j+1}>r_j+1$.  If one offline bin contains $h$ triggers, its span is
greater than $h-1$ and its cost is at least $h$.  Assigning every trigger to
its offline bin gives $\OPT\ge M$ and $\ALG=2M\le2\OPT$.  The last ordinary
timer fires without an end signal.

For an oblivious randomized lower bound, fix integers $m,K\ge2$, put
$\delta=1/m$, and use items of size $1/K$.  After each generated item, choose
the next gap $j/m$ with probability
\[
  p_j=\delta(1-\delta)^{j-1},\qquad j=1,\ldots,m,
\]
or terminate with probability $(1-\delta)^m$; force termination after item
$K$.  This is a finite-support distribution.  Appendix~\ref{app:lower-bounds}
proves the equalizer inequality for every deterministic response, computes
the exact expected offline denominator, takes the limits in the order
$K\to\infty$ and then $m\to\infty$, and performs the finite-support averaging
over a randomized policy's seed.

\begin{theorem}
  \label{thm:rand-lb}
  Every randomized algorithm has competitive ratio at least $e/(e-1)$
  against an oblivious adversary, even when the total size is at most one.
\end{theorem}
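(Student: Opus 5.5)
We apply Yao's principle to the finite-support input distribution just described, after reducing every deterministic response to a threshold rule. Fix $m,K$ and let $\mathcal D$ be the distribution over inputs with items of size $1/K$ and gaps $j/m$ drawn with probabilities $p_j$. Total size is at most one, so Lemma~\ref{lem:one-bin-normalization} applies. We may therefore assume every deterministic algorithm clears all pending items whenever it acts.

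Between arrivals such an algorithm only decides how long to wait. Each arrival history is finite, so its behaviour after the current oldest pending item is summarized by a waiting time $w$ measured from the last arrival (or from the trigger). Because only multiples of $\delta$ matter, the effective choice is some $w$ in $\{0,\delta,\dots,m\delta\}$ or ``wait past the next arrival.'' Since the gap law is memoryless (geometric in $j$, truncated at $m$), every decision point looks statistically identical. This reduces the analysis to a one-step comparison.

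\textbf{Equalizer step.} Consider one decision point: the algorithm has a batch whose oldest item is pending and the last arrival is $u$ units ago. Waiting $w=k\delta$ and then clearing has a cost we can write down. If the next gap satisfies $j\delta\le w$, the item joins the batch and the process continues with extra delay $j\delta$. If instead no arrival occurs before $u+w$, the algorithm pays $1$ plus the accumulated delay, and the next item (if any) starts a fresh batch.

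The offline cost is given by~\eqref{eq:one-bin-opt}: $1+\sum_j\min\{g_j,1\}$, with every gap at most $1$ here. Hence $\OPT=1+\sum_j g_j$ exactly, and $\E[\OPT]$ is computable from the geometric law.

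I would prove, by induction backward from item $K$, the potential inequality
\[
\E[\ALG]\ \ge\ c_m\,\E\Bigl[1+\sum_j g_j\Bigr]-\epsilon_{m,K},
\qquad c_m=\frac{1}{1-(1-\delta)^m}.
\]
The per-step inequality is the equalizer: stopping now at any $k$ versus continuing gives the same expected charge per unit of offline cost, because $\Pr[\text{gap}>k\delta]=(1-\delta)^k$. The stop cost $1$ is balanced against the expected gap mass $\sum_{j\le k}j\delta p_j$ plus termination. This is the exact discrete analogue of the ski-rental equalizer: the adversary's hazard rate $\delta$ per step of length $\delta$ makes every waiting time give the same ratio. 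Here $(1-\delta)^m\to e^{-1}$, so $c_m\to e/(e-1)$.

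\textbf{Limits.} The forced truncation at $K$ items contributes a boundary term bounded independently of $K$, while $\E[\OPT]$ grows linearly in the expected number of items before natural termination. That expected number is a constant depending on $m$, so instead I would normalize by the expected number of renewal cycles. For fixed $m$, the bound $\E[\ALG]/\E[\OPT]\to c_m$ holds as $K\to\infty$ up to $o(1)$. Letting $m\to\infty$ then gives $e/(e-1)$. Finally, for a randomized algorithm, average over the seed: each seed is a deterministic algorithm with $\E_{\mathcal D}[\ALG_u]\ge (c_m-o(1))\E_{\mathcal D}[\OPT]$. Fubini over the finite support then yields some input $I$ in the support with $\E_U[\ALG(I)]\ge (c_m-o(1))\OPT(I)$.

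\textbf{Main obstacle.} The hard step is the equalizer with exact accounting. The algorithm's batch cost depends on the oldest item, not on the last arrival, so the delay already accumulated in the batch must be charged to offline gaps that the optimum also pays. This works because $\OPT$ pays every gap once (all gaps are at most $1$), while the algorithm pays the full span of each batch plus $1$. The cleanest route is to write $\ALG=\#\text{batches}+\sum(\text{gaps inside batches})+\sum(\text{terminal waits})$. This reduces the comparison to showing $\E[\#\text{batches}+\text{terminal waits}]\ge (c_m-1)\E[1+\sum g]$, up to boundary terms, which is the per-step geometric identity. I would verify that identity first, since a miscount there would break the constant.
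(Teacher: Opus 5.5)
\textbf{There is a genuine gap: the step that carries the proof is miscounted.} Your outline matches the paper: Yao over the finite-support law, seedwise clear-all normalization, a discrete equalizer, limits $K\to\infty$ then $m\to\infty$, and averaging over the seed. Your constant $c_m=1/q_m$ with $q_m=1-(1-\delta)^m$ is also right.

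The problem is your final reduction, which drops the gaps that separate online batches. Since $\OPT=1+\sum_{\mathrm{in}}g+\sum_{\mathrm{cut}}g$ while $\ALG=\#\text{batches}+\sum_{\mathrm{in}}g+\text{terminal waits}$, the target $\E[\ALG]\ge c_m\E[\OPT]$ is equivalent to
\[
\E[\#\text{batches}+\text{terminal waits}]\ge (c_m-1)\E[\OPT]+1+\E\Bigl[\sum_{\mathrm{cut}}g\Bigr].
\]
The omitted $1+\E[\sum_{\mathrm{cut}}g]$ is not a boundary term. Your inequality only yields $\E[\ALG]\ge(c_m-1)\E[\OPT]+\E[\sum_{\mathrm{in}}g]$. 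For the clear-immediately policy there are no inside gaps, so this gives ratio $c_m-1<1$.

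Two related claims also fail. First, the equalizer cannot hold ``at ratio $c_m$'' per step. Each nonterminal item contributes the algorithm-independent offline amount $\sum_j p_j\, j/m=2q_m-1$, so equal online charges give a per-step ratio of $1/(2q_m-1)$. The constant $1/q_m$ appears only globally, once $\OPT$'s single setup is added. Second, $\E[\OPT]$ does not grow with $K$: the expected number of items tends to $1/(1-q_m)\approx e$ and $\E[\OPT]\to q_m/(1-q_m)\approx e-1$. So any $O(1)$ boundary error destroys the bound, and normalizing by renewal cycles does not help.

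\textbf{The missing ingredient is an exact per-item charge.} Charge each nonlast item of an online batch the gap to the next item, and the last item $1$ plus its residual wait; these sum exactly to $\ALG$. Let $x$ be the age at which the policy would serve item $i$ if no new item arrived. Its conditional expected charge is
\[
L(x)=\sum_{j/m\le x}p_j\frac jm+\Bigl((1-\delta)^m+\sum_{j/m>x}p_j\Bigr)(1+x).
\]
$L$ has positive slope between grid points and satisfies $L(j/m)=L((j-1)/m)$, because $(1-\delta)^m+\sum_{\ell>j}p_\ell=(1-\delta)^j$. Hence $L\ge L(0)=1$, and the forced $K$th item also pays at least one.

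This gives $\E[\ALG]\ge A_{m,K}=\sum_{i=0}^{K-1}q_m^i$, while exactly $\E[\OPT]=B_{m,K}=1+(2q_m-1)\sum_{i=0}^{K-2}q_m^i$. One checks that $c_mB_{m,K}-A_{m,K}=q_m^{K-2}(1-q_m)$. So your $\epsilon_{m,K}$ does exist and vanishes as $K\to\infty$, but only because both sides are computed exactly rather than bounded up to boundary terms.
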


The matching upper bound samples one global $\Theta\in[0,1]$ with density
$e^\theta/(e-1)$, resets a silence timer after every arrival, and clears all
pending items after $\Theta$ units of silence.  If $P_\theta$ is the resulting
block partition, $m(\theta)=|P_\theta|$, and
\[
  Q(\theta)=m(\theta)\theta+
    \sum_{j:g_j\le\theta}g_j,
\]
then $\ALG_\theta=m(\theta)+Q(\theta)$ and
$Q'(\theta)=m(\theta)$ away from gap breakpoints.  The function $Q$ is
continuous, $Q(0)=0$, and $Q(1)=\OPT$.  Hence
\[
  \int_0^1e^\theta\ALG_\theta\,d\theta
  =\int_0^1(e^\theta Q(\theta))'\,d\theta
  =e\OPT.
\]
This proves that $e/(e-1)$ is the exact randomized ratio on the nonbinding
capacity class.  The same $\Theta$ is used globally and the final silence
timer supplies cleanup without an end signal.

\section{Random Silence with Binding Capacity}
\label{sec:randomized-event}

Set
\[
  \rho=\frac{e}{e-1},
  \qquad
  \alpha=\frac{e-2}{e-1}=2-\rho.
\]
Before the first arrival, sample one global $X\in[0,1]$ with
\begin{equation}
  \label{eq:silence-density}
  \Pr[X\le x]=\frac{e^x-1}{e-1},
  \qquad f(x)=\frac{e^x}{e-1}.
\end{equation}

\begin{algorithm}[h]
  \caption{Capacity-Overflow Random Silence}
  \label{alg:cors}
  \begin{algorithmic}[1]
    \State sample one $X$ according to~\eqref{eq:silence-density}
    \State $R\gets\varnothing$ and timer $\tau\gets+\infty$
    \While{the online process is running}
      \State wait for the next arrival epoch $t$ or the timer $\tau$
      \If{an arrival epoch occurs at $t\le\tau$}
        \State reveal the full epoch and scan it in fixed identity order
        \For{each item $i$ in this epoch}
          \If{$R\ne\varnothing$ and $\vol(R)+s_i>1$}
            \State seal $R$ at time $t$ and set $R\gets\varnothing$
          \EndIf
          \State $R\gets R\cup\{i\}$ and reset $\tau\gets t+X$
        \EndFor
      \Else
        \State seal $R$ at time $\tau$; set $R\gets\varnothing$ and $\tau\gets+\infty$
      \EndIf
    \EndWhile
  \end{algorithmic}
\end{algorithm}

The same $X$ is retained throughout the input.  At a tied epoch, the complete
epoch is revealed first, capacity overflows are executed during its fixed-order
scan, and only after that scan does a timer that remains due fire.  Thus a
boundary arrival resets the timer before any still-due silence action.  The
algorithm is polynomial in the ideal random-real model, uses one buffer, and
the final timer provides service without an end signal.

\begin{figure}[H]
  \centering
  \begin{tikzpicture}[
      >=Latex,
      state/.style={draw,rounded corners,minimum width=31mm,minimum height=10mm,
        align=center,fill=blue!6},
      every edge/.style={draw,->,thick}]
    \node[state] (empty) {empty buffer\\$R=\varnothing$};
    \node[state,right=42mm of empty] (active) {active buffer\\timer $\tau=t_{\rm last}+X$};
    \path (empty) edge[bend left=12]
      node[above,align=center] {arrival: open buffer\\and start timer} (active);
    \path (active) edge[loop above,looseness=5]
      node[above,align=center] {fitting arrival:\\insert and reset timer} (active);
    \path (active) edge[loop below,looseness=5]
      node[below,align=center] {overflow arrival:\\seal old, start new} (active);
    \path (active) edge[bend left=12]
      node[below,align=center] {$X$ units of silence:\\seal buffer} (empty);
  \end{tikzpicture}
  \caption{The event-driven state machine.  Capacity overflow changes the
  bin but not the arrival-driven silence process; a silence event empties
  the system until the next arrival.}
  \label{fig:cors-state}
\end{figure}

\subsection{The uncapacitated gap identity}

Ignore capacity and apply the same silence threshold to the full arrival
sequence.  Set $C_X(\varnothing)=T(\varnothing)=0$.  For a nonempty input,
let $C_X(I)$ be its cost and let
\[
  T(I)=1+\sum_j\min\{g_j,1\}
\]
be the capacity-relaxed optimum.  For $0\le d\le1$, the expected contribution
of a gap is
\begin{align}
  d\Pr[X\ge d]+\int_0^d(1+x)f(x)\,dx
  =\rho d.
\end{align}
For $d\ge1$, it is $\E[1+X]=\rho$.  The first block has the same expected
contribution.  Therefore
\begin{equation}
  \label{eq:uncap-exact}
  \E[C_X(I)]=\rho T(I).
\end{equation}

\subsection{A pathwise capacity decomposition}

For fixed $X$, call a bin an \emph{overflow bin} if it is closed by a
nonfitting arrival, and let $F_X$ be their number.  Every consecutive gap is
either internal to a buffer, paid when an overflow closes the preceding
buffer at the later arrival, or replaced by the terminal wait $X$ after a
silence cut.  Capacity therefore adds openings but no temporal term:
\begin{equation}
  \label{eq:pathwise-decomposition}
  \ALG_X(I)=F_X+C_X(I).
\end{equation}
Indeed, fix one nonempty silence cluster $J$.  If Next Fit creates buffers
$K_1,\ldots,K_m$, let $b_r$ be the first arrival in $K_r$ and let $z$ be
the last arrival in $J$.  The first $m-1$ buffers close at the next buffer's
first arrival and the last closes at $z+X$, so
\[
 \ALG_X(J)=\sum_{r<m}(1+b_{r+1}-b_r)+(1+z+X-b_m)
 =(m-1)+(1+z+X-b_1).
\]
The parenthesized term is exactly $C_X(J)$ and $m-1$ is the number of
overflows in $J$.  Summing proves~\eqref{eq:pathwise-decomposition}, including
zero gaps and multiple overflows at one epoch.  On the empty input set
$F_X=S=H_X=0$, so the identity is trivial.

Let $J_1,\ldots,J_S$ be the global $X$-silence clusters.  Within $J_c$, the
algorithm is ordinary Next Fit.  If it uses $m_c$ bins, then
\begin{equation}
  \label{eq:nf-absolute}
  m_c\le2\BP(J_c)-1.
\end{equation}
Indeed, consecutive Next-Fit bins have combined load greater than one, and
pairing bins proves the integer bound.  Since $F_X=\sum_c(m_c-1)$,
\begin{equation}
  \label{eq:overflow-clusters}
  F_X\le2\sum_c\BP(J_c)-2S.
\end{equation}

Fix an offline optimum $\mathcal O$.  Let $h_B(X)$ be the number of silence
clusters meeting offline bin $B$, and put $H_X=\sum_{B\in\mathcal O}h_B(X)$.
Restricting each offline bin to each cluster gives
\begin{equation}
  \label{eq:cluster-fragments}
  \sum_c\BP(J_c)\le H_X.
\end{equation}

\begin{figure}[H]
  \centering
  \begin{tikzpicture}[x=0.82cm,y=0.75cm,>=Latex]
    \draw[->,thick] (0,0)--(11,0) node[right] {$t$};
    \foreach \x in {0.8,1.7,2.7,3.8,5.1,6.2,7.5,8.7,9.8}
      \fill[black] (\x,0) circle (1.6pt);
    \draw[densely dashed,red!70!black] (3.25,-2.55)--(3.25,0.65);
    \draw[densely dashed,red!70!black] (6.85,-2.55)--(6.85,0.65);
    \node[above] at (1.65,0.3) {$J_1$};
    \node[above] at (5.05,0.3) {$J_2$};
    \node[above] at (8.85,0.3) {$J_3$};
    \node[left] at (0,-1.05) {offline bin $B$};
    \draw[very thick,blue!65!black] (1.7,-1.05)--(7.5,-1.05);
    \foreach \x in {1.7,5.1,7.5}
      \fill[blue!65!black] (\x,-1.05) circle (2.2pt);
    \node[right,blue!65!black] at (7.7,-1.05) {$h_B(X)=3$};
    \node[left] at (0,-2.05) {offline bin $B'$};
    \draw[very thick,green!50!black] (2.7,-2.05)--(6.2,-2.05);
    \foreach \x in {2.7,3.8,6.2}
      \fill[green!50!black] (\x,-2.05) circle (2.2pt);
    \node[right,green!50!black] at (6.4,-2.05) {$h_{B'}(X)=2$};
  \end{tikzpicture}
  \caption{Silence cuts fragment a fixed offline bin only across gaps in
  its temporal hull.  Counting the resulting colored fragments yields
  $H_X$, which simultaneously upper-bounds the bin-packing demand inside
  all silence clusters.}
  \label{fig:offline-fragments}
\end{figure}
A bin can enter a new silence cluster only when a global consecutive gap is
cut between its first and last labelled occurrence in the global
arrival/identity order.  More explicitly, globally order the labelled
occurrences as $i_1,\ldots,i_n$, put
$d_j=a_{i_{j+1}}-a_{i_j}$, and let $r,s$ be the first and last global indices
occupied by $B$.  Pathwise,
\[
 h_B(X)\le 1+\sum_{j=r}^{s-1}\mathbf 1\{X<d_j\},
 \qquad \sum_{j=r}^{s-1}d_j=\spanof(B).
\]
The inequality, rather than equality, allows clusters containing only items
outside $B$.  Since
\[
  \Pr[X<d]=\frac{e^d-1}{e-1}\le d\quad(0\le d\le1)
\]
and the bound is trivial for $d\ge1$, the hull gaps telescope to give
\begin{equation}
  \label{eq:expected-fragments}
  \E[h_B(X)]\le1+\spanof(B),
  \qquad
  \E[H_X]\le\OPT(I).
\end{equation}

\subsection{The signed silence potential}

The negative term $-2S$ in~\eqref{eq:overflow-clusters} is the source of the
improved constant.  Gap by gap,
\begin{align}
  C_X&=(1+X)+\sum_d
    \bigl(d\mathbf 1[X\ge d]+(1+X)\mathbf 1[X<d]\bigr),\\
  S&=1+\sum_d\mathbf 1[X<d].
\end{align}
The initial contribution to $\E[C_X-2S]$ is
$\E[X-1]=-\alpha$.  For $0\le d\le1$, a gap contributes
\begin{equation}
  \label{eq:signed-gap}
  h(d)=\frac{ed-2e^d+2}{e-1}\le\alpha d,
\end{equation}
where the inequality is equivalent to $e^d\ge1+d$.  For $d\ge1$, the
contribution is $-\alpha\le\alpha\min\{d,1\}$.  Hence
\begin{equation}
  \label{eq:signed-potential}
  \E[C_X-2S]\le\alpha T(I)\le\alpha\OPT(I).
\end{equation}

\begin{theorem}[Main randomized upper bound]
  \label{thm:main-randomized}
  Against every oblivious adversary, Capacity-Overflow Random Silence is
  strictly
  \[
    2+\frac{e-2}{e-1}
    =\frac{3e-4}{e-1}
    \approx2.418023
  \]
  competitive in expectation.
\end{theorem}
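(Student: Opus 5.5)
The plan is to combine the pathwise identities already established for a fixed threshold $X$ and take expectations only at the end. Fix an input $I$; the empty input is trivial, so assume $I$ is nonempty. For each fixed $X$, start from the capacity decomposition \eqref{eq:pathwise-decomposition},
\[
\ALG_X(I)=F_X+C_X(I).
\]
Then bound the overflow count by \eqref{eq:overflow-clusters} followed by \eqref{eq:cluster-fragments}, which gives, pathwise,
\[
\ALG_X(I)\le 2\sum_c\BP(J_c)-2S+C_X(I)\le 2H_X+\bigl(C_X(I)-2S\bigr).
\]
This keeps the negative $-2S$ term attached to $C_X$ rather than discarding it. That pairing is the point of the argument. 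Bounding $F_X$ and $C_X$ separately would yield $2+\rho$ instead of $2+\alpha$.

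Next, take expectations over $X$. I apply \eqref{eq:expected-fragments} to the first term, which gives $\E[2H_X]\le2\OPT(I)$. For the second term I apply the signed silence potential \eqref{eq:signed-potential}, which gives $\E[C_X-2S]\le\alpha T(I)\le\alpha\OPT(I)$. Adding the two bounds yields
\[
\E[\ALG(I)]\le(2+\alpha)\OPT(I).
\]
A direct check confirms that $2+(e-2)/(e-1)=(3e-4)/(e-1)$.

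Two details need verifying. First, linearity of expectation requires integrability and measurability in $X$. Every quantity involved ($F_X$, $S$, $H_X$, $C_X$) is a finite step function of $X$, with breakpoints only at the finitely many gap values, so both properties hold. Second, the fixed $X$ is oblivious: the input is chosen independently of the seed, so the same $\mathcal O$ and the same $H_X$ bound apply to every realization. I also check the ordering of quantifiers. The offline optimum $\mathcal O$ is fixed before $X$ is sampled, and the Next-Fit bound \eqref{eq:nf-absolute} is applied cluster by cluster, including clusters where $m_c=1$, for which the bound still holds because $\BP(J_c)\ge1$.

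The main obstacle is the validity of the signed-gap computation \eqref{eq:signed-gap}, particularly the boundary cases. I will recheck that the initial block contributes $\E[1+X]-2=\E[X]-1$. With density $f$, $\E[X]=1/(e-1)$, so this contribution equals $-\alpha$, which is at most the $\alpha\cdot1$ that the leading term of $T(I)$ allows. I will also check that the terms summed over $d$ with $d>1$ each contribute $\rho-2=-\alpha\le\alpha$. Ties, zero gaps, and multiple overflows at one epoch are already handled by the pathwise identity, so once \eqref{eq:signed-potential} is accepted the theorem follows by assembly.
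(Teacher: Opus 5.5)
Your proposal is correct and is essentially the paper's proof. You chain \eqref{eq:pathwise-decomposition}, \eqref{eq:overflow-clusters}, and \eqref{eq:cluster-fragments} into the pathwise bound $\ALG_X\le 2H_X+(C_X-2S)$, keeping the $-2S$ credit attached to $C_X$, and then take expectations using \eqref{eq:expected-fragments} and \eqref{eq:signed-potential}. One small correction to your measurability remark: $C_X$ is not a step function of $X$ but is piecewise affine, because each silence cluster contributes its terminal wait $X$, so the slope is $S$; it is still bounded and Borel, so linearity of expectation goes through unchanged.
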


\begin{proof}
  Combining~\eqref{eq:pathwise-decomposition},
  \eqref{eq:overflow-clusters}, and~\eqref{eq:cluster-fragments} gives
  \[
    \ALG_X\le2H_X+(C_X-2S).
  \]
  Take expectations and apply~\eqref{eq:expected-fragments} and
  \eqref{eq:signed-potential}.
\end{proof}

\begin{remark}
  Bounding $F_X\le2V$ and~\eqref{eq:uncap-exact} separately gives only
  $2+e/(e-1)\approx3.582$.  The proof of
  Theorem~\ref{thm:main-randomized} is genuinely joint: silence clusters that
  create temporal cost also supply the two-unit subtraction in the Next-Fit
  fragment count.
\end{remark}

\section{A Deterministic Polynomial Baseline}
\label{sec:deterministic}

The following polynomial algorithm maintains one pending buffer $R$ of load
strictly below $1/2$.  An arriving item of size at least $1/2$ is sealed
immediately, together with $R$ if they fit.  A smaller item is inserted into
$R$; when the load reaches $1/2$, the whole buffer is sealed.  Finally, a
nonempty buffer is sealed when its oldest item reaches age one.  Arrival
actions precede a tied timer.

\begin{theorem}
  \label{thm:half-full}
  Half-Full-or-One is strictly $4$-competitive.
\end{theorem}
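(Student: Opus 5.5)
The plan is to split the cost of Half-Full-or-One into an opening term and a delay term, and to bound each by a multiple of $\OPT(I)$. By~\eqref{eq:bin-cost},
\[
  \ALG(I)=\#\{\text{bins}\}+\sum_{\text{bins}}(\text{delay of the bin}).
\]
I would classify every sealed bin by the event that seals it:
\begin{itemize}
  \item a \emph{big} bin contains an arriving item of size at least $1/2$, possibly together with $R$;
  \item a \emph{half-full} bin is the buffer, sealed when its load reaches $1/2$;
  \item a \emph{timeout} bin is the buffer, sealed when its oldest item reaches age one.
\end{itemize}
Big and half-full bins have volume at least $1/2$. Hence the number of such bins is at most $2V\le2\BP(I)\le2\OPT(I)$. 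Let $M$ be the number of timeout bins. The target is
\[
  \ALG(I)\le 2V+M+D,
\]
where $D$ is the total delay. I then want to show $M\le\OPT(I)$ and $D\le\OPT(I)$, which gives $\ALG\le4\OPT$.

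\emph{Bounding $M$.} This follows the argument for Theorem~\ref{thm:det-lb}. Buffers are opened only when $R$ is empty, so buffer lifetimes are pairwise disjoint in time. A timeout buffer with oldest item $q$ lives on $[a_q,a_q+1]$. By the arrival-first rule, an arrival at exactly $a_q+1$ still joins (or overflows) that buffer. Therefore the oldest item of the next timeout buffer arrives strictly after $a_q+1$. Assign each timeout bin to the offline bin of an optimum $\mathcal O$ (Lemma~\ref{lem:last-arrival}) containing its oldest item. If an offline bin $B$ receives $h$ such triggers, then $\spanof(B)>h-1$, so its cost $1+\spanof(B)$ is at least $h$. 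Summing over $\mathcal O$ gives $M\le\OPT(I)$.

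\emph{Bounding $D$.} Each bin's delay is bounded by the lifetime window of the buffer it closes: from the buffer's first arrival to the sealing time. For a big bin that is a singleton, sealed without $R$, the delay is zero. The windows are disjoint and each has length at most one. A half-full or big bin is sealed at an arrival, so its window is exactly the union of the consecutive gaps inside its hull, and each such gap is at most one. A timeout window has length one. It covers the gaps inside its hull plus a terminal segment into the following gap $g$, and the next buffer starts after that gap.

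I would then charge every window to the sum $T(I)=1+\sum_j\min\{g_j,1\}$ as follows:
\begin{itemize}
  \item Every gap interior to a window is charged to itself, which is valid because it is at most one and so equals its $\min\{g_j,1\}$.
  \item The terminal segment of a timeout window lies inside the gap $g$ that immediately follows it. If $g\ge1$, that gap contributes $\min\{g,1\}=1$, which covers the terminal segment since the whole window has length one. If $g<1$, the terminal segment has length less than $g$, because the next arrival would otherwise have joined the buffer, so it too is covered by $\min\{g,1\}$.
\end{itemize}
Disjointness of the windows ensures each gap is charged at most once. This yields $D\le T(I)$, and $T(I)\le\OPT(I)$ by~\eqref{eq:one-bin-opt} and Lemma~\ref{lem:cut-congestion}: relaxing capacity can only lower the optimum.

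The step I expect to be the main obstacle is this gap-charging for $D$. The care lies in the boundary cases: zero gaps at tied epochs, a big item arriving while $R$ does not fit it (so $R$ continues and its window is not interrupted), and the exact tie of an arrival at $a_q+1$. I would handle these by defining each window as a half-open interval from its first arrival to its seal. I would then verify, case by case on the sealing event, that its length is at most the sum of $\min\{g_j,1\}$ over the gaps it meets, with each gap assigned to at most one window. Combining the three bounds gives $\ALG(I)\le2\OPT+\OPT+\OPT=4\OPT(I)$.
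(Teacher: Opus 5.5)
Your proof is correct. It uses the same three-part split as the paper: $\ALG=F+K+D$, with the arrival-sealed bins bounded by $2V\le2\OPT$. You bound the two temporal terms differently, though.

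For the timeout count, you rerun the trigger-separation argument of Theorem~\ref{thm:det-lb}. The paper simply observes that every timer bin has delay exactly one, so $K\le D$. That makes your separate argument unnecessary, since $M\le D$ is immediate for you too.

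For the delay, you charge disjoint buffer windows to the gaps of the whole input and compare with $T(I)$. The paper instead keeps one oldest witness $u_j$ per positive-delay service. It notes that $u_{j+1}\ge t_j=u_j+d_j$, because every such service empties the buffer. It then applies \eqref{eq:one-bin-opt} to the witness subsequence alone, obtaining $\OPT\ge1+\sum_j\min\{u_{j+1}-u_j,1\}\ge\sum_jd_j$. Here $d_j\le1$ makes the truncation harmless, and the last delay is paid by the additive one.

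The witness route is shorter. Because the gaps between consecutive witnesses dominate the delays directly, it sidesteps the boundary cases you identify as the main obstacle: big items that arrive inside a window without joining $R$, partial terminal gaps, and ties. It also compares $D$ with the relaxed optimum of a subset, which is at most $T(I)$.

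Your charging does go through, but it needs one extra line. The final timeout window may have no following gap. Its terminal segment must then be charged to the additive $1$ in $T(I)$, which no other window uses. As written, your charging list only covers terminal segments that lie inside a following gap.
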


\begin{proof}
  Let $F$ be the number of arrival-triggered bins, $K$ the number of timer
  bins, $D$ the sum of all algorithmic maximum delays, and $V$ the total
  volume.  Every arrival-triggered bin has load at least $1/2$, so
  $F\le2V\le2\OPT$.  Every timer bin has delay one, hence $K\le D$.

  List positive-delay services in chronological order.  If $u_j$ is the
  arrival of the oldest item in service $j$, $t_j$ its service time, and
  $d_j=t_j-u_j$, then $0<d_j\le1$.  Since every positive-delay service clears
  the entire old buffer, $u_{j+1}\ge t_j=u_j+d_j$.  Retain one oldest witness
  per service and relax capacity.  Formula~\eqref{eq:one-bin-opt} for the
  witness arrivals gives
  \[
    \OPT\ge1+\sum_j\min\{u_{j+1}-u_j,1\}\ge\sum_jd_j=D.
  \]
  Therefore $\ALG=F+K+D\le2V+2D\le4\OPT$.
\end{proof}

The analysis is not tight for this algorithm.  Three compatible small items
at times $0,\varepsilon,2\varepsilon$, with the first two crossing load
$1/2$, give ratio tending to three.  Exact searches have not found a larger
ratio, but no factor-three proof is known.

\section{Periodic Phase Decompositions}
\label{sec:phase}

CORS is efficient in the ideal random-real model and gives ratio about
$2.418$, but it is not the best upper bound if service epochs may solve a
strongly NP-hard packing problem.  Periodic decomposition exposes this
computational tradeoff: exact endpoint optimization gives stronger
unrestricted ratios, and the resulting local oracle motivates the weighted
approximation theory of Section~\ref{sec:weighted-afptas}.

Fix a phase length $L>0$.  For grid offset $U$, the phases are the right-closed
intervals $(U+(k-1)L,U+kL]$, $k\in\mathbb Z$.  An item at a boundary belongs
to the phase ending there: the entire boundary epoch is revealed before that
phase is served, consistently with the arrival-first convention.  At the
right endpoint $r$ of a phase, let $S$ be
the items arriving in that phase.  The exact phase algorithm chooses a
capacity-feasible partition minimizing
\begin{equation}
  \label{eq:phase-objective}
  q_r(\mathcal P)=\sum_{C\in\mathcal P}
    \left(1+r-\min_{i\in C}a_i\right)
\end{equation}
and seals all its bins at $r$.  This problem is strongly NP-hard: placing all
items at the right endpoint makes every age zero and recovers ordinary bin
packing.

Restrict every bin of a fixed offline optimum to its nonempty phase
fragments.  Those fragments form a feasible competitor for every phase solve.
If the grid is fixed, a direct endpoint calculation gives
\begin{equation}
  \label{eq:det-phase}
  R_{\mathrm{det}}(L)=\max\{2+L,1+1/L\}.
\end{equation}
The two branches balance at $L=(\sqrt5-1)/2$, giving
$(3+\sqrt5)/2$.

For completeness, if one offline bin of span $\Delta$ meets $m\ge2$ phases
and $x\in[0,L)$ is the distance from its first item to its phase's right
boundary,
its fragment cost is at most
\[
 1+x+(m-1)(1+L),\qquad
 \Delta\ge x+(m-2)L.
\]
The quotient is maximized either by two fragments with $x\downarrow0$,
giving $2+L$, or by many crossed phases, giving $1+1/L$.  Both branches are
attained as suprema over finite inputs.  For the first, release two half-size
items at $L-\varepsilon$ and $L+\varepsilon$.  Exact phase optimization pays
$2+L$, while the offline optimum pays $1+2\varepsilon$; the ratio tends to
$2+L$.  For the second, distribute items of total volume one over a fine mesh
spanning $m$ consecutive phases, avoiding every boundary by $\varepsilon$.
The phase
algorithm uses one bin of cost $1+L-\varepsilon$ per phase, whereas the
nonbinding offline formula gives $1+mL-2\varepsilon$.  Letting first
$m\to\infty$ and then $\varepsilon\downarrow0$ gives $1+1/L$.  Thus
\eqref{eq:det-phase} is the exact supremal ratio.

If the grid has a uniform random offset in $[0,L)$, independently of the
fixed input, an offline bin of span $\Delta$ has expected fragment cost at
most
\[
  1+\frac L2+\left(1+\frac1L\right)\Delta.
\]
Therefore
\begin{equation}
  \label{eq:random-phase}
  R_{\mathrm{rand}}(L)=\max\{1+L/2,1+1/L\},
\end{equation}
which is minimized at $L=\sqrt2$ and equals $1+1/\sqrt2$.

To see the coefficient of $\Delta$, expose one consecutive gap $g$.  Its
expected incremental fragment cost is
\[
 h_L(g)=
 \begin{cases}
 g/L+g-g^2/(2L),&0\le g<L,\\
 1+L/2,&g\ge L,
 \end{cases}
 \qquad h_L(g)\le(1+1/L)g.
\]
The first fragment costs $1+L/2$ in expectation.  Summing over the gaps in
one fixed offline bin proves the displayed estimate.  The offset is one
global seed, independent of the fixed input.

Both branches in~\eqref{eq:random-phase} are necessary for this algorithm.
A single item has uniform residual time to the next boundary and gives
$1+L/2$ exactly.  For the other branch, put $n+1$ items of total volume one
at $0,h,\ldots,nh$, where $0<h<\min\{1,L\}$.  The exact gap calculation above
gives
\[
 \E[\ALG]=1+\frac L2+n\left(\frac hL+h-\frac{h^2}{2L}\right),
 \qquad \OPT=1+nh.
\]
Letting $n\to\infty$ and then $h\downarrow0$ gives $1+1/L$.  Hence
\eqref{eq:random-phase} is again the exact supremal ratio against an
oblivious adversary.

More generally, if a positive global random $L$ is followed by a
conditionally uniform offset, with $\E[L]<\infty$ and $\E[1/L]<\infty$, the exact
ratio within that family is
\[
  \max\{1+\E[L]/2,1+\E[1/L]\}\ge1+1/\sqrt2.
\]
The dense-mesh limit is justified by dominated convergence with dominator
$1+1/L$.

For the renewal comparison, let the boundaries form an equilibrium
stationary renewal process on $\mathbb R$, sampled independently of the
input.  Under its Palm distribution let the positive i.i.d. inter-boundary
lengths have law $Y$, with $0<\mu=\E[Y]<\infty$ and
$\E[Y^2]<\infty$.  A single item is a finite witness whose expected cost is
$1+\E[Y^2]/(2\mu)$, by the equilibrium forward-recurrence formula, while its
offline cost is one.

For the second finite family, place items of total volume one on a mesh of
width $0<h<1$ in $[0,H]$.  The offline cost divided by $H$ tends to one as
$H\to\infty$.  For fixed $h$, the renewal-reward theorem gives limiting
setup and endpoint-delay rates $s(h)$ and $d(h)$ for the occupied renewal
intervals.  Palm--Campbell counting and dominated convergence give
$s(h)\to1/\mu$ and $d(h)\to1$ as $h\downarrow0$.  The two boundary intervals
contribute $O(1)$ in expectation, because the equilibrium forward and
backward recurrence times have finite means under $\E[Y^2]<\infty$.
Consequently
\[
 \lim_{h\downarrow0}\lim_{H\to\infty}
 \frac{\E[\ALG_{H,h}]}{H}=1+\frac1\mu,
 \qquad
 \lim_{H\to\infty}\frac{\OPT_{H,h}}H=1.
\]
These two finite limiting families give the lower barrier
\[
  \max\left\{1+\frac{\E[Y^2]}{2\E[Y]},
              1+\frac1{\E[Y]}\right\}
  \ge1+\frac1{\sqrt2}.
\]
The inequality uses $\E[Y^2]\ge\mu^2$ and
$\max\{\mu/2,1/\mu\}\ge1/\sqrt2$; equality forces
$Y=\sqrt2$ almost surely.  Thus neither the global-random-period family nor
the stationary-renewal family improves the uniform shifted grid.  The
renewal display is a universal lower barrier, not a claim that it is the
exact ratio of every renewal law.  Arbitrary nonstationary or nonrenewal
input-independent point processes are not covered.

\subsection{Polynomial weighted phase packing}

At endpoint $r$, give item $i$ age $w_i=r-a_i$ and process items in
nonincreasing age order with a prefix-consistent ordinary online bin-packing
algorithm $A$.  Suppose every finite prefix $Q$ satisfies
$A(Q)\le c\BP(Q)$.  For the final packing $\mathcal P$, let
$S_x=\{i:w_i\ge x\}$ and let $N_{\mathcal P}(x)$ count bins meeting $S_x$.
Layer cake gives
\[
  q_r(\mathcal P)=|\mathcal P|+
    \int_0^\infty N_{\mathcal P}(x)\,dx.
\]
Prefix consistency implies $N_{\mathcal P}(x)=A(S_x)$ almost everywhere,
while every feasible phase packing uses at least $\BP(S_x)$ bins on $S_x$.
Thus $q_r(\mathcal P)\le c q_r^*$.

Using the absolute $5/3$ ordinary online bin-packing algorithm
of~\cite{balogh2019} yields polynomial global ratios
\[
  \frac53\cdot\frac{3+\sqrt5}{2}
  =\frac{5(3+\sqrt5)}6
\]
deterministically and
\[
  \frac53\left(1+\frac1{\sqrt2}\right)
  =\frac{5(2+\sqrt2)}6
\]
randomly.  The latter is superseded by
Theorem~\ref{thm:main-randomized}, but the layer-cake reduction remains useful
for offline approximation and for other weighted phase objectives.

\section{A Weighted Endpoint AFPTAS for the Phase Oracle}
\label{sec:weighted-afptas}

The phase bounds of Section~\ref{sec:phase} are stated first with exact
endpoint optimization, but that local problem is strongly NP-hard.  This
section closes the resulting phase oracle: for every fixed phase length,
weighted endpoint packing has an AFPTAS.  It is not a separate online
paradigm, and its per-phase additive term prevents it from automatically
turning the unrestricted phase ratio into a polynomial-time global
competitive guarantee.

We now treat the finite optimization problem used inside one phase.  Fix a
constant $L>0$.  Item $i$ has size $s_i\in(0,1]$ and age
$w_i\in[0,L]$; a bin $B$ costs
\[
  c(B)=1+\max_{i\in B}w_i.
\]
This problem contains ordinary bin packing when all ages are zero, so an
absolute PTAS is impossible unless $\mathrm P=\mathrm{NP}$.  Nevertheless,
the ordered age structure permits an asymptotic fully polynomial scheme.

\begin{theorem}
  \label{thm:weighted-afptas}
  For every fixed rational $L>0$, every rationally encoded instance, and every
  rational $\varepsilon\in(0,1]$, weighted endpoint packing has an algorithm
  running in time polynomial in the input length and $1/\varepsilon$ and
  returning a packing of cost
  \[
    \ALG\le(1+\varepsilon)\OPT+P_L(1/\varepsilon),
  \]
  where $P_L$ is a polynomial independent of the instance.  The empty
  instance is returned unchanged.
\end{theorem}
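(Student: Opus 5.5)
Sort items by nonincreasing age and use the layer-cake identity from the polynomial weighted phase packing subsection: for any packing $\mathcal P$, $c(\mathcal P)=|\mathcal P|+\int_0^L N_{\mathcal P}(x)\,dx$, where $N_{\mathcal P}(x)$ counts bins meeting $S_x=\{i:w_i\ge x\}$. The plan is a Karmarkar--Karp style configuration LP, adapted so that the age of each bin is encoded by its configuration.

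\textbf{Age rounding.} Round ages up to the grid $\{0,\delta,2\delta,\ldots\}$ with $\delta=\varepsilon/2$. This costs at most an additive $\delta$ per bin. Since $|\mathcal P|\le\OPT$, the multiplicative loss is $(1+\varepsilon/2)$. There are now $k=\lceil L/\delta\rceil=O(L/\varepsilon)$ age classes, and $L$ is fixed.

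\textbf{Small items.} Remove the items of size below $\varepsilon' =\Theta(\varepsilon)$.

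\textbf{Linear grouping by nested prefixes.} Apply linear grouping to the large items, but along nested age prefixes. Within each age class, round the sizes up with linear grouping into $O(1/\varepsilon'^2)$ groups, shifting each group onto the next larger one. Discarding the largest group costs at most $O(1/\varepsilon'^{2})$ extra singleton bins per class, each of cost at most $1+L$. Over $k$ classes this gives an additive $O(L/\varepsilon^{3})$, which is acceptable as $P_L(1/\varepsilon)$.

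\textbf{Configuration LP.} Write the LP with a configuration for each pair (multiset of rounded large sizes of ages at most $a$, top age $a$), with cost $1+a$. The covering constraints must let an item of age $w$ be placed only in configurations with top age at least $w$. Encode this as "items of age $\ge x$ are covered by configurations of top age $\ge x$" for each class. Solve the LP approximately via its dual, which is a knapsack with an age cap per class, using the knapsack FPTAS~\cite{lawler1979}. The number of constraints is $k\cdot O(1/\varepsilon'^2)$, polynomial in $1/\varepsilon$.

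\textbf{Rounding the LP solution.} Take a basic solution and round each configuration variable down. This adds at most (number of constraints)$\cdot(1+L)$, again additive $P_L(1/\varepsilon)$.

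\textbf{Small items, reinserted.} Add the small items greedily in nonincreasing age order. Each goes Next-Fit into existing bins of top age at least its age, and a new bin is opened only when every eligible bin has load above $1-\varepsilon'$.

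\textbf{Main obstacle.} The hard step is the small-item accounting, because bin cost depends on age. I would use the layer-cake identity again. For each threshold $x$, if new bins were opened at ages $\ge x$, then every bin meeting $S_x$ except one is nearly full. That bounds $N(x)\le(1+2\varepsilon')\BP(S_x)+1$ pointwise in $x$. Integrating over $[0,L]$ gives $(1+O(\varepsilon))\OPT+L+1$. The subtle point is that the eligible bins must be exactly those meeting $S_x$, which is why the items are processed in age order. Also needed is that the LP step's integer solution yields counts $N(x)$ bounded by the LP's fractional layer counts plus the additive term; I would verify this layer by layer through the nested covering constraints. Finally, the $\varepsilon$ losses compose multiplicatively; rescale $\varepsilon$ by a constant to absorb them.
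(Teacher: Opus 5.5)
\textbf{There is a genuine gap in how you handle the small items.} You delete them before the LP, so the LP sees only large items. It therefore never has a reason to declare a bin older than that bin's oldest large item. Afterwards an old small item may enter only bins whose declared age already covers it, so no old residual capacity is ever reserved. This breaks the algorithm itself, not only its analysis.

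Take $L=1/2$. Let there be $n$ items of size $3/5$ and age $0$, plus small items of age $1/2$ whose total volume $2n/5$ splits exactly into $n$ groups of volume $2/5$.
\begin{itemize}
\item Your LP, solved optimally or to within $1+O(\varepsilon)$, puts essentially all large items into bins declared at age $0$. None of these bins is eligible for the small items.
\item Next Fit then opens about $2n/5$ new bins of cost $3/2$, so you pay about $8n/5$.
\item Pairing each large item with one group of small items costs only $3n/2$.
\end{itemize}
For small $\varepsilon$ and large $n$ this violates $(1+\varepsilon)\OPT+P_L(1/\varepsilon)$.

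The missing idea is to put the small-item demand inside the LP through the nested suffix constraints $\sum_{C:\ell(C)\ge p}r_Cx_C\ge\sum_{\ell\ge p}V_\ell$. The LP then itself pays for declaring bins old enough to host the old small volume. The reconstruction fills eligible residuals from the oldest level downward, discarding a residual when an item does not fit. These same constraints give $O_p\le D_p$, so the overflow volume is below $\tau M$, and Next Fit packs it at multiplicative cost $O(\tau)$. Because the reconstruction uses the suffix constraints, the approximate LP solve must also return an exactly feasible primal; this is what the column-recovery lemma supplies.

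\textbf{Your proposed accounting cannot close the gap either.} A pointwise bound $N(x)\le(1+2\varepsilon')\BP(S_x)+1$ for all $x$ is unattainable by any packing in general. The reason is that $\BP(S_0)+\int_0^L\BP(S_x)\,dx$ can be a constant factor below $\OPT$. Take $L=1$, $2n$ items of size $0.4$ and age $1$, and $2n$ items of size $0.6$ and age $0$:
\begin{itemize}
\item The layer-cake lower bound is $2n+n=3n$.
\item Every packing costs at least $4n$. If $b_1$ and $b_2$ bins hold one or two old items respectively, then $b_1+2b_2=2n$. Bins holding two old items cannot hold a $0.6$ item, so there are at least $2n+b_2$ bins in total. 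The cost is therefore at least $(2n+b_2)+(b_1+b_2)=4n$.
\end{itemize}
Pointwise comparison with $\BP(S_x)$ is exactly what gives the constant-factor layer-cake reduction of Section~\ref{sec:phase}. A $(1+\varepsilon)$ scheme must instead be charged against the single LP value. Concretely, your step ``every bin meeting $S_x$ except one is nearly full'' already fails in the first example at $x=0$: the large-item bins meet $S_0$ but were never eligible for the small items.

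Two smaller slips:
\begin{itemize}
\item With $O(1/\varepsilon'^2)$ groups per class, each group has about $\varepsilon'^2N$ items. Since $N\le V/\varepsilon'$, the discarded groups cost about $(1+L)(\varepsilon'\OPT+k)$ in total. This is a multiplicative loss that forces $\varepsilon'=O(\varepsilon/(1+L))$, not an additive $O(L/\varepsilon^3)$.
\item Rounding a basic solution down leaves items uncovered, so you should round up, as the paper does.
\end{itemize}
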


\begin{remark}
For a fixed explicitly represented algebraic $L$ of bounded degree and
height, the same argument works in its exact ordered field.  We keep the
formal theorem rational to make the finite-input arithmetic model explicit;
fixed rational phase lengths can approximate an algebraic target constant
arbitrarily closely.
\end{remark}

The proof combines age-stratified linear grouping with a configuration LP
whose residual-capacity constraints are nested by age.  The complete
algorithm is summarized in Algorithm~\ref{alg:weighted-afptas}.

\begin{algorithm}[t]
  \caption{Weighted endpoint AFPTAS}
  \label{alg:weighted-afptas}
  \begin{algorithmic}[1]
    \Require Items $(s_i,w_i)$, fixed $L$, accuracy $\varepsilon$
    \State Set $\tau=\min\{1/4,\varepsilon/[20(1+L)]\}$
    \State Round every age upward on a grid of width at most $\tau$ ending at $L$
    \State Call $i$ large if $s_i\ge\tau$ and small otherwise
    \For{each rounded age level}
      \State sort its large items by nonincreasing size
      \State split them into groups of size $\lceil\tau^2N\rceil$
      \State remove the first group and round every later group upward
    \EndFor
    \State approximately solve LP~\eqref{eq:weighted-config-lp} by
      knapsack pricing; recover an exactly feasible sparse primal
    \State round every positive configuration variable upward
    \State inject the retained large items into the resulting typed slots
    \State process small items from oldest to youngest; discard any tested
      residual smaller than the current item; after all eligible residuals
      are exhausted, mark all remaining items of that level as overflow
    \State pack overflow items by Next Fit in nonincreasing age order
    \State pack each removed first-group item alone; delete empty template
      bins and downgrade every remaining declared age to its oldest item
  \end{algorithmic}
\end{algorithm}

\subsection{Rounding ages and large sizes}

Use the grid $0,\tau,2\tau,\ldots,\lfloor L/\tau\rfloor\tau,L$, deleting a
duplicate final point if necessary, and round each age to its next grid
point.  Thus every rounded age remains at most $L$.  Rounding raises the cost
of each bin by at most $\tau$.  Since every nonempty bin costs at least one,
the rounded-age optimum $\OPT^+$ satisfies
\begin{equation}
  \label{eq:age-rounding}
  \OPT\le\OPT^+\le(1+\tau)\OPT.
\end{equation}
There are $K=O(L/\tau+1)$ rounded levels $W_1<\cdots<W_K$.

Within one level containing $N$ large items, use groups of size
$q=\max\{1,\lceil\tau^2N\rceil\}$.  Remove the first group and round every
item of group $j\ge2$ to the largest size in that group.  The rounded items
of group $j$ fit into the slots occupied by group $j-1$ in any original
packing: the injection preserves both capacity and age.  Across all levels,
the removed items cost at most
\begin{equation}
  \label{eq:discarded-groups}
  (1+L)(\tau\OPT+K)
\end{equation}
when packed as singletons.  The retained instance has only
$T=O(K/\tau^2)$ rounded large types.

\subsection{Configurations and nested residual capacity}

A configuration $C$ has a declared level $\ell(C)$ and a capacity-feasible
multiset of rounded large types whose ages are at most $W_{\ell(C)}$.  Write
\[
  c_C=1+W_{\ell(C)},\qquad
  r_C=1-\sum_t a_{Ct}s_t,
\]
where $a_{Ct}$ is the number of type-$t$ slots.  Empty configurations are
allowed.  If $n_t$ is the multiplicity of large type $t$ and $V_\ell$ is the
total small-item volume at level $\ell$, use the covering LP
\begin{equation}
  \label{eq:weighted-config-lp}
  \begin{aligned}
    \min\quad &\sum_C c_Cx_C\\
    \text{s.t.}\quad
      &\sum_C a_{Ct}x_C\ge n_t &&(t=1,\ldots,T),\\
      &\sum_{C:\,\ell(C)\ge p}r_Cx_C
        \ge\sum_{\ell\ge p}V_\ell &&(p=1,\ldots,K),\\
      &x_C\ge0.&&
  \end{aligned}
\end{equation}
Every packing induces a feasible solution, so the LP is a lower bound on the
rounded retained instance.  The suffix direction is essential: an old small
item can use only a bin whose declared level is at least its own.

Figure~\ref{fig:nested-residual} illustrates the Ferrers structure.  The
highest-age demand has the smallest neighborhood; adding younger classes
reveals successively more residual capacity.  Consequently the $K$ suffix
inequalities in~\eqref{eq:weighted-config-lp} are precisely the relevant
fractional Hall conditions.

\begin{figure}[t]
  \centering
  \begin{tikzpicture}[
      x=1.15cm,y=.72cm,
      demand/.style={draw,rounded corners=1pt,fill=red!10,minimum width=.8cm,minimum height=.42cm},
      slot/.style={draw,rounded corners=1pt,fill=blue!10,minimum width=.8cm,minimum height=.42cm},
      edge/.style={gray!65,thin}
    ]
    \node[font=\small] at (0,4.8) {small-item level};
    \node[font=\small] at (5.1,4.8) {declared bin level};
    \foreach \y/\lab in {4/$K$,3/$K-1$,2/$2$,1/$1$}{
      \node[demand] (d\y) at (0,\y) {$\lab$};
      \node[slot] (s\y) at (5.1,\y) {$\lab$};
    }
    \foreach \a in {1,...,4}{
      \foreach \b in {\a,...,4}{
        \draw[edge,-{Latex[length=2mm]}] (d\a.east) -- (s\b.west);
      }
    }
    \node[align=center,font=\small] at (2.55,-.25)
      {eligibility is nested: a bin may accept\\only items no older than its declaration};
  \end{tikzpicture}
  \caption{Nested residual-capacity eligibility, drawn with level $K$ at the
  top.  Reversing the level convention reverses every prefix/suffix
  inequality.}
  \label{fig:nested-residual}
\end{figure}

An optimal basic solution of~\eqref{eq:weighted-config-lp} uses at most
$T+K$ positive columns.  Rounding each of them upward preserves all covering
constraints and adds at most
\begin{equation}
  \label{eq:config-ceiling}
  (1+L)(T+K)=\operatorname{poly}_L(1/\varepsilon)
\end{equation}
to the objective.  Extra large-type slots are left empty; their physical
residual capacity is at least the conservative value $r_C$ used by the LP.

\subsection{Integral reconstruction of small items}

Instantiate the rounded configurations and process small items from level
$K$ down to level $1$.  Test only eligible residuals.  Whenever an item does
not fit a tested residual, discard that residual permanently; once all
eligible residuals at a level are exhausted or discarded, mark every
remaining item of that level as overflow.

Let $M$ be the number of template bins.  After level $p$ is processed, let
$O_p$ be the overflow volume among levels at least $p$, and let $D_p$ be the
total residual discarded from bins declared at levels at least $p$.  If no
new item overflows, downward induction gives $O_p\le D_p$.  Otherwise every
eligible original residual is either assigned or discarded.  If
$R_p^{\rm initial}$ is their initial total residual, $A_p$ the assigned
volume, and $V_{\ge p}$ the small-item demand, the suffix constraint gives
\[
  R_p^{\rm initial}=A_p+D_p,
  \qquad V_{\ge p}=A_p+O_p,
  \qquad V_{\ge p}\le R_p^{\rm initial}.
\]
Thus $O_p\le D_p$ in both cases.  Each bin discards at most one residual,
and every discarded residual is smaller than a small item, hence smaller
than $\tau$.  Therefore total overflow volume is below $\tau M$.  Empty
template bins are deleted after reconstruction.  Next Fit
uses at most
\begin{equation}
  \label{eq:small-overflow}
  \frac{\tau M}{1-\tau}+1
\end{equation}
new bins.  Processing overflow in nonincreasing age order makes the first
item of each new bin its oldest item.

\subsection{Polynomial pricing}

It remains to avoid explicit enumeration of all configurations.  Let $y_t$
and $z_p$ be the nonnegative dual variables for the large-type and suffix
rows.  For a fixed declared level $\ell$, put
$\Lambda_\ell=\sum_{p\le\ell}z_p$.  The maximum dual numerator of a
level-$\ell$ column is
\begin{equation}
  \label{eq:weighted-pricing}
  \Lambda_\ell+\max\left\{
    \sum_t a_t(y_t-\Lambda_\ell s_t):
    \sum_t a_ts_t\le1,
    \ w_t\le W_\ell
  \right\}.
\end{equation}
For each level, divide the returned score by its column cost
$c_\ell=1+W_\ell$, and return the largest normalized score over all $K$
levels.  Negative modified profits may be omitted.  The inner problem is
bounded knapsack: every
large type has multiplicity at most $1/\tau$.  A knapsack FPTAS
therefore gives a $(1-\rho)$ pricing oracle in time polynomial in
$1/\rho$, $1/\varepsilon$, and the encoding length~\cite{lawler1979}.

Lemma~\ref{lem:column-recovery}, proved in
Appendix~\ref{app:afptas-recovery}, then recovers a polynomial set of genuine
columns whose restricted primal is exactly feasible and has value at most
$\mathrm{LP}/(1-\rho)+\rho$.  Exact feasibility matters: it preserves every
suffix inequality used in the integral reconstruction.  The appendix spells
out the additive threshold scan, seed columns, dual bounding box, and finite
LP duality; a grey-zone binary search would not suffice.  Taking
$\rho=\tau$ yields polynomial running time.

\paragraph{End-to-end cost accounting.}
Let $\mathrm{LP}$ be the optimum of LP~\eqref{eq:weighted-config-lp} for the
rounded retained instance.  Apply the group-shift injection simultaneously
inside an optimal rounded-age packing: retained rounded large items occupy
the preceding groups' slots and the small items stay in place.  Hence
\[
   \mathrm{LP}\le\OPT^+\le(1+\tau)\OPT.
\]
With $\rho=\tau$, Lemma~\ref{lem:column-recovery} returns an exactly feasible
basic solution of value
\[
   P\le\frac{\mathrm{LP}}{1-\tau}+\tau.
\]
It has at most $T+K$ positive variables.  After rounding them upward, let
$\overline P=\sum_Cc_C\lceil x_C\rceil$ be the declared cost of the template
bins.  Since $c_C\le1+L$,
\begin{equation}
 \label{eq:afptas-template-cost}
 \overline P\le
 \frac{\mathrm{LP}}{1-\tau}+\tau+(1+L)(T+K).
\end{equation}
If $M=\sum_C\lceil x_C\rceil$, then $M\le\overline P$ because every
configuration costs at least one.

By~\eqref{eq:small-overflow}, Next Fit opens at most
$\tau M/(1-\tau)+1$ overflow bins.  Each costs at most $1+L$, since its first
item is oldest and rounded ages do not exceed $L$.  Adding these bins and the
removed first-group singletons gives
\begin{align}
 \ALG
 &\le (1+L)(\tau\OPT+K)
 +\left(1+\frac{(1+L)\tau}{1-\tau}\right)\overline P
 +(1+L) \notag\\
 &\le\left[(1+L)\tau+
   \frac{(1+\tau)(1+L\tau)}{(1-\tau)^2}\right]\OPT
   +\Gamma_L(\tau),
 \label{eq:afptas-total}
\end{align}
where
\[
 \Gamma_L(\tau)=(1+L)K+
 \left(1+\frac{(1+L)\tau}{1-\tau}\right)
 [\tau+(1+L)(T+K)]+(1+L).
\]
Because $K=O(L/\tau+1)$ and $T=O(K/\tau^2)$, this is
$\operatorname{poly}_L(1/\tau)$.  For $0<\tau\le1/4$, put
$R(\tau)=(1+\tau)/(1-\tau)^2$.  The elementary bounds
$R(\tau)\le1+6\tau$ and $R(\tau)\le3$ give
\[
 (1+L)\tau+(1+L\tau)R(\tau)
 \le1+7\tau+4L\tau
 \le1+7(1+L)\tau.
\]
Our choice $\tau\le\varepsilon/[20(1+L)]$ therefore makes the coefficient
at most $1+\varepsilon$, while
$\Gamma_L(\tau)=\operatorname{poly}_L(1/\varepsilon)$.  This proves
Theorem~\ref{thm:weighted-afptas}.

\begin{remark}[Why this does not immediately improve the online ratio]
  Applying the AFPTAS independently in every occupied phase incurs its
  additive term once per phase.  The number of occupied phases can be linear
  in the offline optimum, so these terms cannot be absorbed into one global
  additive constant.  The theorem therefore supplies a fully polynomial
  approximation for the finite weighted subproblem, but it does not by
  itself turn the exact $1+1/\sqrt2$ phase bound into a polynomial-time online
  competitive ratio.
\end{remark}

\section{An Exact Binding-Capacity Benchmark: Poisson Half-Size Items}
\label{sec:stochastic-half}

The adversarial bounds leave open how binding capacity changes the optimal
waiting rule itself.  To isolate that effect, we study the simplest
stochastic regime in which capacity is genuinely active: every item has
size $1/2$, so each bin contains at most two items.  This Poisson model is an
exact binding-capacity benchmark, not a competing interpretation of the
adversarial ratios; its offline and optimal causal rates can both be solved
in closed form.  Arrivals form a rate-$\lambda$ Poisson
process on $[0,T]$, every item has size $1/2$, and the online algorithm is
not told when the input ends.  It must therefore clear a final unmatched
item using an ordinary finite timer.  We write $N_T$ for the number of
arrivals and set both costs to zero when $N_T=0$.  The two performance
statistics are
\begin{align*}
 R_{\rm RoE}(T)&=\frac{\E[\ALG_T]}{\E[\OPT_T]},\\
 R_{\rm ER}(T)&=\E\!\left[\frac{\ALG_T}{\OPT_T}
                    \mathrel{\bigg|}N_T>0\right].
\end{align*}
They generally differ for finite $T$.

\subsection{The offline matching process}

For ordered arrivals $a_1<\cdots<a_n$, replacing two singleton bins by the
pair $i<j$ saves
\[
  w(i,j)=\bigl(1-(a_j-a_i)\bigr)_+.
\]
Thus the offline problem is maximum-weight matching on the arrival points.
Although an optimum need not be unique, there is always one using only
adjacent pairs.  Indeed, uncrossing positive edges does not decrease their
total saving.  Within any remaining matched interval all relevant edges are
in the linear part of $w$, where pairing consecutive points minimizes total
edge length.  Consequently, if $C_i$ is the optimum cost on the first $i$
arrivals, then
\begin{equation}
 C_0=0,\qquad C_1=1,\qquad
 C_i=\min\{C_{i-1}+1,\ C_{i-2}+1+a_i-a_{i-1}\}.
 \label{eq:stoch-offline-dp}
\end{equation}
The unmatched, crossing, and nested exchange cases are given in
Appendix~\ref{app:stochastic-half}; the conclusion is existential because
tied instances may also have nonadjacent optima.

The recurrence also gives the exact limiting constant.  Let
$G_i=a_{i+1}-a_i$ and $W_i=(1-G_i)_+$.  If $M_i=i-C_i$ is the maximum
saving, its marginal $Z_i=M_i-M_{i-1}$ satisfies
\begin{equation}
 Z_i=(W_{i-1}-Z_{i-1})_+.
 \label{eq:stoch-saving-chain}
\end{equation}
For Poisson input, the $G_i$ are independent $\operatorname{Exp}(\lambda)$
variables.  Equation~\eqref{eq:stoch-saving-chain} resets to zero whenever
$G_i\ge1$, so it has a unique stationary law and forgets its initial state
geometrically.  Its stationary distribution is particularly simple:
\begin{equation}
 \Pr(Z=0)=\frac1{\lambda+1},\qquad
 f_Z(z)=\frac{\lambda}{\lambda+1}\quad(0<z<1).
 \label{eq:stoch-stationary-law}
\end{equation}
Substitution in the distributional recursion verifies
\eqref{eq:stoch-stationary-law}, and hence
\begin{equation}
 \frac{\OPT_T}{T}\longrightarrow
 r_*(\lambda):=\frac{\lambda(\lambda+2)}{2(\lambda+1)}
 \quad\text{almost surely and in $L^1$.}
 \label{eq:stoch-opt-rate}
\end{equation}
Appendix~\ref{app:stochastic-half} supplies the invariant equation, reset
coupling, and random-index ergodic passage from items to calendar time.

\subsection{Pair immediately or use a timer}

The natural online state contains at most one pending item.  When the next
item arrives, the two are sealed immediately; otherwise the pending item is
sealed when its timer expires.  Algorithm~\ref{alg:stoch-half} uses the
rate-dependent timer sequence that is needed on a finite input without an
end signal.

\begin{algorithm}[H]
  \caption{Horizon-free pair-or-timeout for half-size Poisson input}
  \label{alg:stoch-half}
  \begin{algorithmic}[1]
    \State $k\gets0$; no item is pending
    \Statex \textbf{On an arrival with no pending item:}
    \State \quad $k\gets k+1$ and make the item pending
    \State \quad set $\tau_k\gets0$ if $\lambda\le1$; otherwise set
      $\tau_k\gets(2/\lambda)\log(k+1)$
    \State \quad start a timer of length $\tau_k$
    \Statex \textbf{On an arrival while one item is pending:}
    \State \quad cancel the timer and seal the two items together immediately
    \Statex \textbf{If the pending item's timer expires first:}
    \State \quad seal that item alone
  \end{algorithmic}
\end{algorithm}

For intuition, first fix a deterministic timeout $\tau<\infty$ and put
$q=e^{-\lambda\tau}$.  One regenerative cycle consumes $2-q$ items in
expectation and costs $1+(1-q)/\lambda$.  Its exact calendar-time cost rate is
\begin{equation}
 g_\tau(\lambda)
 =\frac{\lambda+1-e^{-\lambda\tau}}
        {2-e^{-\lambda\tau}}.
 \label{eq:stoch-fixed-timer}
\end{equation}
The derivative with respect to $q$ has the sign of $\lambda-1$.  Therefore
the best action changes sharply at rate one: seal immediately below rate one,
and make the timer diverge above rate one.

Random timers do not improve this calculation.  For any hypothetical timer
$S$, independent of the next Poisson increment conditional on the past, the
same formulas hold with $q=\E[e^{-\lambda S}\mid\text{past}]$.  More
importantly, the restriction to one pending item loses nothing.

\begin{lemma}[Immediate-pair normalization]
\label{lem:stoch-normalization}
Fix the arrival-first convention and the within-epoch identity order.  Every
admissible causal policy is pathwise dominated, on every finite labelled
input and for every fixed private seed, by an admissible causal policy with
the following rule.  At an arrival epoch, place the unique item pending from
earlier epochs, if any, before the new items, order the latter by identity,
and seal consecutive pairs immediately.  Thus at most one item remains
pending after the epoch.  The transformation is not claimed to preserve a
narrower syntactic notion of stationarity.
\end{lemma}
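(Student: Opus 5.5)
We prove Lemma~\ref{lem:stoch-normalization} by a shadow simulation with injective charging, in the spirit of Lemma~\ref{lem:one-bin-normalization}. Fix a seed $u$, so that $A$ is a deterministic causal policy. The normalized policy $A'$ runs $A$ as a shadow on the same arrival history. Its own actions depend only on the arrivals and on the shadow's past actions, so $A'$ is causal.

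The rule for $A'$ is as follows. At each arrival epoch, $A'$ forms the list consisting of its leftover pending item (if any) followed by the new items in identity order. It seals consecutive pairs at once, at cost $1+(t-a_{\text{first}})$. At most one item $p$ is then left over. Between epochs, $A'$ seals $p$ alone at the first time $t$ at which the shadow $A$ seals a bin containing $p$. This choice is what makes the charging work. A singleton of $A'$ then costs exactly $1+t-a_p$, which is at most the cost of the shadow bin containing $p$. If a new epoch arrives first, $p$ is paired at that epoch's time, and the pair costs $1+t'-a_p$ for some $t'$ that is no later than the shadow's service of $p$. All of these stopping-time selections are measurable for the same reason as in Lemma~\ref{lem:one-bin-normalization}, so the joint measurability in $(U,h)$ follows by induction over the finitely many identities.

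The core of the proof is to charge every bin of $A'$ injectively to a shadow bin of no smaller cost. Each $A'$-bin is charged to the shadow bin containing its \emph{older} item $q$. The cost bound comes from service times. Since $q$ is pending for $A'$ until the moment it is sealed, and $A'$ seals $q$ no later than the shadow serves it, the $A'$-bin's seal time is at most the shadow's service time of $q$. Because $q$ is the older item of its $A'$-bin, that bin costs at most the cost of the shadow bin containing $q$.

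Injectivity is the step I expect to be the main obstacle. The difficulty is that one shadow bin can contain two items, and both might be the older item of different $A'$-bins. I would handle this by charging along a chronological token argument. Maintain the invariant that after each epoch the number of items $A'$ has sealed is at least the number the shadow has sealed, minus the pending parity correction. Then argue that whenever a shadow bin $\{x,y\}$ with $a_x\le a_y$ has both items serving as older items of $A'$-bins, a second shadow bin can be found that receives no charge. The argument for this is an exchange: the $A'$-bin whose older item is $y$ must contain an item $z$ younger than $y$, and the shadow bin containing $z$ is uncharged unless $z$ heads another bin. That would create an infinite descending chain, which is impossible on a finite input.

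A cleaner alternative, which I would try first, is a direct count. Compare the number of bins: $A'$ greedily pairs, so it uses the fewest bins possible among policies that respect the order in which items become available. Then compare the delay sum separately, showing that each $A'$-bin delay is bounded by the delay of a distinct shadow bin via the matching above. Finally, I would check the edge cases: simultaneous arrivals, ties broken by identity order, and the fallback of the final timer.

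Admissibility needs a separate check. $A'$ clears every item because the shadow eventually serves every item. Each bin of $A'$ has size at most one because it contains at most two half-size items.
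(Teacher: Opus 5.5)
There is a genuine gap, and it lies in your construction of $A'$, not only in the injectivity bookkeeping. Your policy pairs greedily and seals the leftover alone as soon as the shadow serves it, and this policy is not pathwise dominated by $A$. Take half-size items $1,2$ at time $0$, item $3$ at time $0.1$ and item $4$ at time $0.3$. Let the shadow seal $\{1,3\}$ at time $0.2$ and $\{2,4\}$ at time $0.3$, for total cost $1.2+1.3=2.5$. Your $A'$ then pays at least $3.1$:
\begin{itemize}
\item it seals $\{1,2\}$ at time $0$, at cost $1$;
\item it seals $\{3\}$ alone at time $0.2$, because the shadow serves item $3$ then, at cost $1.1$;
\item it needs a third bin for item $4$, at cost at least $1$.
\end{itemize}
Your per-bin bound is correct: each $A'$-bin closes no later than the shadow serves its older item. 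But $\{1,2\}$ and $\{3\}$ both charge the shadow bin $\{1,3\}$, and there is no uncharged shadow bin left to absorb the excess. So no descending-chain argument can rescue injectivity. Your ``direct count'' claim also fails: greedy pairing does not minimize the number of bins here ($A'$ uses three, $A$ uses two). The lemma itself survives, because the canonical policy that keeps item $3$ pending and pairs it with item $4$ at time $0.3$ pays $2.2$.

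The missing idea is how to treat the displaced shadow partners. Suppose the canonical pair $\{i,j\}$ formed at time $t$ takes its items from two distinct shadow bins closing at $u$ and $v$. The paper then keeps the partners $x,y$ pending and pairs them with each other at $m=\max\{u,v\}$. The cost change is $(t-\min\{a_x,a_y\})+(t-\min\{u,v\})\le0$. This uses the fact that, before the first violation, at most one old item is pending, so every partner arrives no earlier than $t$. At time $m$ the physical and shadow pending sets coincide again. In the example above this repair gives exactly $2.5-0.3=2.2$.

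This repaired policy can itself violate the canonical rule later, so a single shadow pass is not enough. The paper therefore:
\begin{itemize}
\item repairs only the first violation, with violations ordered by the global rank of the second item;
\item iterates the repair, noting that the violation rank strictly increases, so finitely many repairs suffice on each finite input;
\item uses finite-history stabilization to define one causal limit policy on the whole input tree.
\end{itemize}
Your proposal has neither the partner re-pairing nor this iteration-and-limit step.
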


\begin{proof}[Proof idea]
Order pair opportunities by the global rank of their second item, using
arrival time followed by identity.  Shadow-simulate the original policy
after its first violation.  Before that violation there was at most one old
pending item, so every alternative shadow partner arrived no earlier than
the violation time $t$.  If the prescribed pair would share a later shadow
bin, move that bin to $t$.  If its items would use different bins, let those
bins close at $u$ and $v$, pair the prescribed items at $t$, retain their
zero, one, or two partners, and pack the partners at
$m=\max\{u,v\}$.  With $s=\min\{u,v\}$, repaired cost minus shadow cost is
respectively
\[
 (t-\min\{a_x,a_y\})+(t-s),\qquad
 (t-a_x)+(t-s),\qquad\text{or at most }-1,
\]
and is nonpositive because every partner arrives at or after $t$ and
$s\ge t$.  At time $m$ the physical and shadow identity sets resynchronize.
Repeated first-violation repair moves the violation rank strictly forward.
Finite-history stabilization defines one projectively consistent causal
limit policy; on every finite input it agrees through cleanup with a finite
repair iterate and is no more expensive.
\end{proof}
The one-deviation operator, simultaneous-epoch convention, finite-history
stabilization, and exact identity resynchronization are formalized in
Appendix~\ref{app:stochastic-half}.

Let $\mathcal G_{k-1}$ be the sigma-field generated by the master seed and the
complete arrival/action history through the arrival of cycle leader $k$.
Conditional on $\mathcal G_{k-1}$, let $q_k$ be the transform of its
hypothetical timer and let $C_k,D_k$ denote cycle cost and time to the next
cycle leader.  For the two candidate rates,
\begin{align}
 \E[C_k-\lambda D_k\mid\mathcal G_{k-1}]
   &=\frac{(1-q_k)(1-\lambda)}{\lambda} &&(\lambda<1),
 \label{eq:stoch-drift-low}\\
 \E\!\left[C_k-\frac{\lambda+1}{2}D_k
      \mathrel{\bigg|}\mathcal G_{k-1}\right]
   &=\frac{q_k(\lambda-1)}{2\lambda} &&(\lambda>1).
 \label{eq:stoch-drift-high}
\end{align}
At $\lambda=1$, either display has zero drift after continuous extension.
Lemma~\ref{lem:stoch-normalization} and these nonnegative drifts give the
lower bound for every causal policy, including history-dependent and
randomized ones.
The localization and stopped-martingale argument needed to sum these drifts
through a random number of cycles is in Appendix~\ref{app:stochastic-half}.

For $\lambda>1$, an infinite timer would attain zero drift on the infinite
stream but is inadmissible on a truncated input.  Algorithm~\ref{alg:stoch-half}
uses finite timers with
$e^{-\lambda\tau_k}=(k+1)^{-2}$.  The drift errors are summable, while the
final timer is only logarithmic.  A stopped-martingale argument gives
\begin{equation}
 \E[\ALG_T]=\frac{\lambda+1}{2}T+O_\lambda(\log(T+2)),
 \qquad
 \frac{\ALG_T}{T}\longrightarrow\frac{\lambda+1}{2}
 \quad\text{in $L^1$}.
 \label{eq:stoch-alg-rate}
\end{equation}
Appendix~\ref{app:stochastic-half} proves the $L^1$ claim by a stopped
martingale estimate; the final finite timer controls the horizon boundary.
Thus the exact optimal causal long-run rate is
\begin{equation}
 g^*(\lambda)=
 \begin{cases}
   \lambda,&0<\lambda\le1,\\[2pt]
   (\lambda+1)/2,&\lambda\ge1.
 \end{cases}
 \label{eq:stoch-online-rate}
\end{equation}

\subsection{The two stochastic ratios}

Combining~\eqref{eq:stoch-opt-rate} and
\eqref{eq:stoch-online-rate} yields the complete frontier.

\begin{theorem}[Exact Poisson half-size frontier]
\label{thm:stoch-half-frontier}
For every $\lambda>0$, Algorithm~\ref{alg:stoch-half} is asymptotically
optimal among admissible causal algorithms.  Moreover,
\[
 \lim_{T\to\infty}R_{\rm RoE}(T)
 =\lim_{T\to\infty}R_{\rm ER}(T)
 =R^*(\lambda),
\]
where
\begin{equation}
 R^*(\lambda)=
 \begin{cases}
 \displaystyle\frac{2(\lambda+1)}{\lambda+2},
       &0<\lambda\le1,\\[7pt]
 \displaystyle\frac{(\lambda+1)^2}{\lambda(\lambda+2)},
       &\lambda\ge1.
 \end{cases}
 \label{eq:stoch-ratio}
\end{equation}
The function tends to one in both the sparse and dense limits and has maximum
$4/3$ at $\lambda=1$.
\end{theorem}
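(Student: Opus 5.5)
The plan is to combine the two rate results already established: the offline rate \eqref{eq:stoch-opt-rate} and the online rate \eqref{eq:stoch-alg-rate} together with the lower bound \eqref{eq:stoch-online-rate}. The theorem then splits into three parts: asymptotic optimality, equality of the two ratio limits, and the calculus facts about $R^*$.

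\textbf{Asymptotic optimality.} Lemma~\ref{lem:stoch-normalization} reduces every admissible causal policy, pathwise, to pair-or-timeout form. The nonnegative drifts \eqref{eq:stoch-drift-low} and \eqref{eq:stoch-drift-high}, summed by the stopped-martingale argument, give $\liminf_T \E[A_T]/T\ge g^*(\lambda)$ for every causal $A$. Algorithm~\ref{alg:stoch-half} attains this bound. For $\lambda\le1$ the timer is zero, every item is sealed alone or paired on a simultaneous epoch, and the cost is at most $N_T$, so $\E[\ALG_T]\le\lambda T$. For $\lambda>1$ the bound is \eqref{eq:stoch-alg-rate}. Hence $\E[\ALG_T]/T\to g^*(\lambda)$, which is optimal.

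\textbf{Ratio of expectations.} Since $\E[\OPT_T]/T\to r_*(\lambda)>0$ by the $L^1$ statement in \eqref{eq:stoch-opt-rate}, we get $R_{\rm RoE}(T)\to g^*/r_*$. Substituting gives $2(\lambda+1)/(\lambda+2)$ for $\lambda\le1$ and $(\lambda+1)^2/(\lambda(\lambda+2))$ for $\lambda\ge1$.

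\textbf{Conditional expected ratio.} This is the main obstacle, because a ratio of $L^1$-convergent quantities need not converge in mean. I would argue as follows.
\begin{enumerate}
\item Both quantities converge in probability: $\ALG_T/T\to g^*$ in $L^1$ and $\OPT_T/T\to r_*>0$ almost surely, so $\ALG_T/\OPT_T\to g^*/r_*$ in probability on $\{N_T>0\}$.
\item The conditioning event is harmless, since $\Pr(N_T>0)\to1$.
\item For uniform integrability I would use $\OPT_T\ge N_T/2$, because every bin contains at most two items and costs at least one. For $\lambda\le1$ this gives $\ALG_T\le N_T\le 2\OPT_T$, so the ratio is bounded by $2$ and dominated convergence finishes.
\item For $\lambda>1$ a cycle cost is at most $1+\tau_k$ with $\tau_k\le(2/\lambda)\log(N_T+1)$. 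Hence $\ALG_T\le N_T(1+(2/\lambda)\log(N_T+1))$, and the ratio is at most $2+(4/\lambda)\log(N_T+1)$. This bound has uniformly bounded second moment under Poisson tails, which gives uniform integrability and therefore convergence of the conditional mean to the same limit $g^*/r_*$.
\end{enumerate}

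\textbf{Shape of $R^*$.} It remains to analyse $R^*$. On $(0,1]$, $2(\lambda+1)/(\lambda+2)=2-2/(\lambda+2)$ is increasing, tends to $1$ as $\lambda\downarrow0$, and equals $4/3$ at $1$. On $[1,\infty)$, $(\lambda+1)^2/(\lambda^2+2\lambda)=1+1/(\lambda^2+2\lambda)$ is decreasing, equals $4/3$ at $1$, and tends to $1$. Thus $R^*$ is continuous with maximum $4/3$ at $\lambda=1$.
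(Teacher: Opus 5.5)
Your overall architecture matches the paper's: normalization plus drift gives the lower rate, the timer schedule gives the upper rate, $L^1$ rate convergence gives the ratio of expectations, and convergence in probability plus uniform integrability gives the conditional expected ratio. Your $\lambda\le1$ envelope $\ALG_T/\OPT_T\le2$ is also fine. The gap is step 4.

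Your bound $\ALG_T/\OPT_T\le 2+(4/\lambda)\log(N_T+1)$ is correct pathwise, but it does not have uniformly bounded second moment. Since $N_T\sim\operatorname{Pois}(\lambda T)$, $\log(N_T+1)$ concentrates near $\log(\lambda T)\to\infty$. Hence $\E[(2+(4/\lambda)\log(N_T+1))^2\mid N_T>0]$ grows like $\log^2T$. A dominating family that itself diverges in probability is not uniformly integrable, so nothing follows about the conditional means. The loss comes from charging every cycle the maximal timer $\tau_{N_T}$. That inflates the envelope to order $N_T\log N_T$, although the algorithm's cost is $O(T)$.

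The missing idea is the telescoping envelope. In cycle $k$, the bin is sealed no later than the arrival of the next cycle leader: either the partner arrives first, or the timer fires before any arrival. So every delay except that of the final unmatched item is at most $L_{k+1}-L_k$, and these sum to at most $T$. Only the last cycle can wait a full timer. Hence $\ALG_T\le N_T+T+\tau_{N_T+1}$, and with $\OPT_T\ge N_T/2$,
\[
 \frac{\ALG_T}{\OPT_T}\le 2+\frac{2T}{N_T}+\frac{2\tau_{N_T+1}}{N_T}.
\]
The last term is bounded, because $\log(n+2)/n$ is bounded for $n\ge1$. The middle term now needs a genuine estimate that your proposal never supplies. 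Split at $N_T=\lambda T/2$ and use the Poisson lower-tail bound $\Pr(N_T\le\lambda T/2)\le e^{-\lambda T/8}$, together with $T/N_T\le T$ on the bad event. This gives $\sup_T\E[(T/N_T)^2\mid N_T>0]<\infty$ for large $T$. That uniform conditional $L^2$ bound delivers uniform integrability and lets convergence in probability pass to the conditional expectation.
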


\begin{proof}
The rate lower bound follows from Lemma~\ref{lem:stoch-normalization} and
\eqref{eq:stoch-drift-low}--\eqref{eq:stoch-drift-high}; the matching upper
rates follow from Algorithm~\ref{alg:stoch-half} and
\eqref{eq:stoch-alg-rate}.  Rate convergence in $L^1$ proves the
ratio-of-expectations statement.

For the expected realized ratio, the only additional issue is uniform
integrability.  On every nonempty realization, the algorithm satisfies the
pathwise bound
\[
 \ALG_T\le N_T+T+\tau_{N_T+1},
 \qquad \OPT_T\ge N_T/2.
\]
For $N_T\sim\operatorname{Pois}(\lambda T)$, splitting at
$N_T=\lambda T/2$ and using a Poisson lower-tail bound gives
\[
 \sup_{T\text{ large}}
 \E\!\left[(T/N_T)^2\mid N_T>0\right]<\infty.
\]
Since $\tau_n=O_\lambda(\log(n+1))$, the conditional ratios are uniformly
integrable.  Their convergence in probability to $g^*(\lambda)/r_*(\lambda)$
therefore passes to conditional expectations and gives
\eqref{eq:stoch-ratio}.
The cycle-by-cycle telescope behind the pathwise envelope and the complete
conditional $L^2$ estimate are recorded in
Appendix~\ref{app:stochastic-half}.
\end{proof}

Thus the half-size model supplies the capacity-binding counterpart to the
exact nonbinding frontier of Section~\ref{sec:lower-bounds}: the performance
criteria differ, but both benchmarks identify an optimal waiting rule rather
than merely bounding one prescribed algorithm.

\begin{figure}[H]
  \centering
  \begin{tikzpicture}[x=1.25cm,y=5.8cm]
    \draw[-{Latex[length=2mm]}] (0,1) -- (5.25,1)
      node[right] {$\lambda$};
    \draw[-{Latex[length=2mm]}] (0,1) -- (0,1.38)
      node[above] {$R^*(\lambda)$};
    \draw[blue!70!black,very thick,domain=0.03:1,samples=80]
      plot (\x,{2*(\x+1)/(\x+2)});
    \draw[blue!70!black,very thick,domain=1:5,samples=120]
      plot (\x,{(\x+1)*(\x+1)/(\x*(\x+2))});
    \draw[densely dashed] (1,1) -- (1,1.3333);
    \fill[blue!70!black] (1,1.3333) circle (1.3pt);
    \foreach \x in {1,2,3,4,5}
      \draw (\x,0.995) -- (\x,1.005) node[below=2pt] {\x};
    \draw (0,1.3333) -- (-0.05,1.3333)
      node[left] {$4/3$};
    \node[blue!70!black] at (3.7,1.19) {exact causal frontier};
  \end{tikzpicture}
  \caption{The optimal long-run stochastic ratio.  The hardest arrival rate
  is the critical point $\lambda=1$; batching becomes asymptotically free in
  both the sparse and dense limits.}
  \label{fig:stoch-half-ratio}
\end{figure}

\section{Synthesis and Open Problems}
\label{sec:conclusion}

Per-bin maximum delay couples temporal batching to capacity compatibility.
The last-arrival normal form exposes the shared offline geometry, while the
nonbinding special case supplies an exact temporal baseline.  Our main
online result shows how to add capacity without analyzing time and packing
separately: silence clusters fragment offline bins, but the same clusters
also create the negative credit that pays for Next-Fit overflow.  This yields
the ideal-real polynomial ratio
$(3e-4)/(e-1)\approx2.418023$.

\paragraph{Does capacity change the adversarial frontier?}
For randomized algorithms against an oblivious adversary, the present gap is
\[
  \frac{e}{e-1}\le R_{\rm rand}\le
  \frac{3e-4}{e-1}.
\]
The lower bound is inherited from nonbinding capacity.  Thus capacity is
known to change the design and analysis of efficient algorithms, but it is
not yet known to increase the optimal competitive ratio itself.  The two
most decisive resolutions would be a capacity-binding lower bound strictly
above $e/(e-1)$ or a general algorithm matching that temporal constant.  The
deterministic frontier similarly remains $2$ versus $(3+\sqrt5)/2$ with
unrestricted computation and $2$ versus $4$ in polynomial time.

\paragraph{What is the computational price of the offline geometry?}
The temporal-span objective still lacks a sharp approximation threshold or a
full AFPTAS.  The weighted endpoint problem inside a fixed phase does admit
the AFPTAS of Theorem~\ref{thm:weighted-afptas}, but its additive loss is paid
once per occupied phase and therefore does not automatically turn the exact
phase ratios into polynomial-time global guarantees.  Bridging that
per-phase additive term, or obtaining an approximation below the present
$5(2+\sqrt2)/6$ factor for the full offline problem, would directly tighten
the computational hierarchy developed here.  An absolute PTAS or FPTAS is
excluded by the simultaneous-arrival $3/2$ barrier unless
$\mathrm P=\mathrm{NP}$; an asymptotic scheme remains plausible.

\paragraph{Which binding-capacity regimes are exactly solvable?}
The Poisson half-size model gives one complete answer.  Offline matching,
policy normalization, and the no-end-signal timer analysis identify both the
optimal causal rate and the common long-run ratio, with worst value $4/3$.
The next genuinely different alphabet is $\{1/3,2/3\}$, where the choice
between complementary pairs and triples destroys immediate-pair
normalization.  That extension, and later continuous size distributions,
form the next stochastic stage of this program rather than additional
prerequisites for the present half-size benchmark.

These questions are different faces of one issue: whether capacity merely
complicates the implementation of temporal batching or fundamentally changes
the best attainable waiting rule.  The offline, adversarial, computational,
and stochastic results above provide a common framework in which that issue
can be tested.

\appendix
\section{Deferred Proofs for Offline Structure and Phase Algorithms}
\label{app:offline-phase}

\subsection{Gap decomposition}

Fix a cut of length $g_j$.  For one crossing bin, let the extreme arrival
times of its nonempty left and right restrictions be
$\ell\le u\le v\le w$.  Splitting the bin at the cut changes its cost by
\[
 (1+u-\ell)+(1+w-v)-(1+w-\ell)=1+u-v\le1-g_j.
\]
Thus repeated splitting constructs an optimum avoiding every cut with
$g_j\ge1$.  When $g_j>1$, splitting strictly improves every crossing bin, so
no optimum crosses.  Equality is sharp: two compatible items at times zero
and one cost two either together or separately.

\subsection{Hardness and the absolute barrier}

Set every arrival time to zero.  Then every feasible bin costs exactly one,
so the objective is ordinary bin packing without any change to the item
encoding.  The standard strongly NP-complete 3-Partition decision
instances therefore prove strong NP-hardness.

For the absolute approximation barrier, first reject the trivial no-instance
if some element exceeds half the total sum.  Otherwise scale an ordinary
Partition instance by its half-sum, so every item lies in $(0,1]$, the total
size is two, and all items arrive at time zero.  Its
optimum is two if and only if the multiset can be split into two unit bins;
otherwise it is at least three.  A polynomial algorithm of fixed absolute
ratio $\rho<3/2$ would return fewer than three, hence exactly two, on every
yes-instance and at least three on every no-instance.  It would decide
Partition in polynomial time.  This argument is absolute and does
not rule out an asymptotic scheme with an additive term.

\subsection{Derandomizing the offline shifted-grid approximation}

Run the polynomial age-ordered phase packer of Section~\ref{sec:phase} with
$L=\sqrt2$.  Under a uniform global offset, its expected endpoint cost is at
most
\[
 \frac53\left(1+\frac1{\sqrt2}\right)\OPT.
\]
After the packing is produced, serve every output bin at its own last
arrival.  This preserves its item set and capacity and can only decrease
cost, so the same expectation bounds the resulting offline temporal-span
partition.

As the offset moves on the circle of circumference $L$, phase membership
changes only at the $O(n)$ residues $a_i\bmod L$.  Inside each open residue
interval, phase membership, stable within-phase age order, the deterministic
ordinary online packing, and the normalized output partition are constant.
Enumerate one representative of every circular interval and return the
cheapest normalized partition.  Its cost is no larger than the offset
average.  For rational arrivals, comparisons with $L=\sqrt2$ are exact in
the fixed ordered field $\mathbb Q(\sqrt2)$ and have polynomial bit
complexity.  This proves the offline factor stated in
Section~\ref{sec:offline}.

\subsection{Exact special cases}

\paragraph{Nonbinding capacity.}
If total size is at most one, two bins with overlapping temporal hulls may be
merged: one setup disappears and the union span is no larger than the sum of
the old spans.  Hence an optimum consists of temporally disjoint consecutive
blocks.  Relative to one block, cutting gap $g_j$ adds one setup and removes
exactly $g_j$ span.  The cuts are therefore chosen independently at
$g_j>1$, with indifference at equality, proving
\eqref{eq:one-bin-opt}.  The empty instance has value zero.

\paragraph{Equal sizes.}
Let every item have size $s$ and $k=\lfloor1/s\rfloor$.  For two bins of
cardinalities $p,q$, sort their union and give the first $p$ occurrences to
one bin and the last $q$ to the other.  Feasibility is preserved.  If the old
hulls overlap, their total length is at least the hull of their union, while
the two sorted hull lengths sum to no more than that union length; if the old
hulls are disjoint, the operation only orders them.

For a formal induction, among the unfrozen items let $B$ be the bin containing
the earliest occurrence and put $p=|B|$.  Exchange $B$ in turn with every
other unfrozen bin, always returning the earliest $p$ occurrences of the
two-bin union to $B$.  Each exchange preserves cardinalities and does not
increase total span.  After the sweep, $B$ contains exactly the earliest $p$
unfrozen occurrences.  Freeze it and apply the same argument to the
remainder.  This yields an optimum of consecutive blocks.

For sorted arrivals, its recurrence is
\[
 D[0]=0,\qquad
 D[i]=\min_{1\le q\le\min\{k,i\}}
       \{D[i-q]+1+a_i-a_{i-q+1}\}.
\]
Writing $j=i-q$ turns this into a sliding minimum of $D[j]-a_{j+1}$ over
$\max\{0,i-k\}\le j\le i-1$.  A monotone deque evaluates all states in
$O(n)$ time after sorting and uses $O(k)$ space.

\paragraph{Cardinality at most two.}
Build a graph on the labelled items, with an edge $ij$ exactly when
$s_i+s_j\le1$, and weight it by
$w_{ij}=(1-|a_i-a_j|)_+$.  Starting from all singleton bins, selecting a
feasible pair saves exactly $w_{ij}$; negative-saving pairs are never useful.
If no feasible bin contains three items, every packing is a matching plus
singletons and
\[
 \OPT=n-\max\{w(M):M\text{ is a matching}\}.
\]

\paragraph{Arbitrary instances.}
For a mask $M$ of labelled items, let $p(M)$ be its least labelled member and
define
\[
 F(\varnothing)=0,\qquad
 F(M)=\min_{\substack{S\subseteq M:\ p(M)\in S\\ \vol(S)\le1}}
       \{1+\spanof(S)+F(M\setminus S)\}.
\]
Fixing $p(M)$ avoids redundant choices of the first block.  There are
$O(3^n)$ state--submask pairs and $O(2^n)$ states, proving the claimed time
and space bounds.

\section{Deferred Proofs for the Nonbinding-Capacity Frontier}
\label{app:lower-bounds}

\subsection{Deterministic lower bound}

Fix $N$ and give every item size $1/N$.  After applying
Lemma~\ref{lem:one-bin-normalization}, release the first item at time zero.
If item $i$ arrives at $a_i$ and the normalized deterministic policy would
serve it, in the absence of another arrival, at $t_i=a_i+x_i$, release item
$i+1$ at $a_{i+1}=t_i+\delta_i$.  Choose $\delta_i>0$ with
$\Delta=\sum_{i<N}\delta_i$ arbitrarily small.  Admissibility makes every
$x_i$ finite.  Because the policy is deterministic, recursively simulating
these actions compiles one fixed finite input depending on that policy.

Exactly one item is pending at every service.  Put
$S=\sum_{i<N}\min\{x_i,1\}$.  Then
\[
 \ALG\ge N+S,
 \qquad
 \OPT\le1+S+\Delta.
\]
Since $0\le S\le N-1$ and we may take $\Delta<N-1$,
\[
 \frac{\ALG}{\OPT}
 \ge\frac{N+S}{1+S+\Delta}
 \ge\frac{2N-1}{N+\Delta}.
\]
Taking $N\to\infty$ and then $\Delta\downarrow0$ proves the lower bound two.

\subsection{Finite-support randomized lower bound}

Use the distribution defined in Section~\ref{sec:lower-bounds}.  Put
$s_m=(1-1/m)^m$ and $q_m=1-s_m$.  For a deterministic clear-all policy,
charge a nonlast item in an online batch the gap to the next item and charge
the last item one plus its residual wait.  The charges sum exactly to the
online cost.  Conditional on a generated nonterminal item, let $x$ be the
age at which the policy would serve if no new item arrived.  Its conditional
expected charge is
\[
 L(x)=\sum_{j/m\le x}p_j\frac jm
      +\left(s_m+\sum_{j/m>x}p_j\right)(x+1).
\]
On every open interval between grid points, $L$ has positive slope.  Let
$x_j=j/m$ and $r=1-1/m$.  At a grid point, arrival-first processing places
the equality gap in the first sum; since
$s_m+\sum_{\ell>j}p_\ell=r^j$, the exact cancellation is
\[
 L(x_j)-L(x_{j-1})
 =\frac1m r^j-\left(1-\frac1m\right)p_j
 =\frac1m r^j-r\frac1m r^{j-1}=0.
\]
Since $L(0)=1$, we have $L(x)\ge1$ for every $x\ge0$.  The forced $K$th item
also has charge at least one.

Define
\[
 A_{m,K}=\sum_{i=0}^{K-1}q_m^i,
 \qquad
 B_{m,K}=q_m\sum_{i=0}^{K-2}q_m^i+q_m^{K-1}.
\]
Then every deterministic policy satisfies $\E\ALG\ge A_{m,K}$.  Moreover,
$\sum_jp_j(j/m)=1-2s_m=2q_m-1$, so the exact nonbinding formula gives
\[
 \E\OPT
 =1+(2q_m-1)\sum_{i=0}^{K-2}q_m^i
 =B_{m,K}.
\]
First let $K\to\infty$ and then $m\to\infty$.  Since
$q_m\to1-e^{-1}$, the ratio $A_{m,K}/B_{m,K}$ tends to $e/(e-1)$.

Finally condition any randomized policy on its private seed.  The
normalization gives
\[
 \ALG_{A_\omega}(I)\ge \ALG_{A'_\omega}(I)
\]
for every fixed seed $\omega$ and every input in the support.  The
deterministic inequality holds for $A'_\omega$, hence also after averaging.
Because the distribution has finite support for fixed $m,K$, at least one
fixed input $I$ in that support satisfies
\[
 \E_A\ALG_A(I)\ge\frac{A_{m,K}}{B_{m,K}}\OPT(I).
\]
The witness is fixed independently of the seed.  Taking the two limits in the
stated order proves the oblivious lower bound.

\section{Approximate Pricing and Exact Column Recovery}
\label{app:afptas-recovery}

This appendix isolates the nonstandard interface used in
Theorem~\ref{thm:weighted-afptas}.  After deleting zero-demand rows, write the
covering configuration LP as
\[
 \min\{c^Tx:Ax\ge b, x\ge0\}
\]
and normalize a column's dual score by its positive cost.  The pricing oracle
returns a genuine column whose normalized score is at least
$(1-\rho)$ times the true maximum score.

\begin{lemma}[Approximate pricing with exact primal recovery]
\label{lem:column-recovery}
Let $b\in\mathbb Q_{>0}^m$, and suppose the finite implicit column family has
nonnegative rational columns $A_C$, positive rational costs $c_C$, and
polynomial encoding length.  Assume that we have (i) a polynomial-size seed
column set supporting an exactly feasible solution of cost at most $U_0$,
(ii) a $(1-\rho)$ approximate normalized-pricing oracle for
$0<\rho\le1/4$, and (iii) a polynomial bound on
$\lceil U_0/\rho\rceil$ in the stated input and accuracy parameters.  Then a
polynomial list of genuine columns can be found whose restricted covering LP
is exactly feasible and has value at most
\[
  \frac{\operatorname{LP}}{1-\rho}+\rho,
\]
where $\operatorname{LP}$ is the full optimum.  The returned solution can be
chosen basic, with at most $m$ positive variables and polynomial encoding
length.
\end{lemma}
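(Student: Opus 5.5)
The plan is to run an additive-threshold scan over a guessed objective value. Each guess is tested by a polynomial-size dual separation. The final primal is recovered by finite LP duality on the accumulated genuine columns, with no ellipsoid-style reconstruction.

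Throughout, write the full primal as $\min\{c^Tx:Ax\ge b,x\ge0\}$. Its dual is $\max\{b^Ty: A_C^Ty\le c_C\ \forall C,\ y\ge0\}$, and strong duality holds because the seed columns make the primal feasible. Normalized feasibility of a dual vector $y$ means that $\max_C A_C^Ty/c_C\le1$.

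\textbf{Scan and per-threshold test.} I will scan thresholds $\theta_k=k\rho$ for $k=0,1,\dots,\lceil U_0/\rho\rceil$, which is polynomially many by (iii). Since $\mathrm{LP}\le U_0$, some threshold satisfies $\theta_k\le\mathrm{LP}+\rho$.

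For each threshold, consider the dual polytope
\[
 P_k=\{y\ge0:\ b^Ty\ge\theta_k,\ A_C^Ty\le c_C\ \text{for all } C\}.
\]
I intersect it with a bounding box: nonnegativity, $b>0$ and $b^Ty\le U_0$ give $y_i\le U_0/b_i$ on the relevant region.

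I will run the ellipsoid method on the $(1-\rho)$-shrunk family of constraints, using the approximate oracle as the separator. At a query point $y$, the oracle returns a genuine column $C$ with $A_C^Ty/c_C\ge(1-\rho)\max_{C'}A_{C'}^Ty/c_{C'}$. If this score exceeds $1$, the column gives a valid cut violated by $y$, and I add $C$ to the working list. Otherwise the true maximum score is at most $1/(1-\rho)$, so $(1-\rho)y$ is exactly dual-feasible with value at least $(1-\rho)\theta_k$. That certifies $\mathrm{LP}\ge(1-\rho)\theta_k$.

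If instead the ellipsoid method terminates with the volume argument, then the polytope defined only by the generated cuts, the seed columns, the box, and $b^Ty\ge\theta_k$ is empty. Emptiness has to be argued with the standard rational-vertex volume lower bound, so polynomial encoding length of the columns and of $b$ is used here.

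\textbf{Choosing the threshold and recovering the primal.} Take the smallest $k$ whose test reports emptiness; let the generated list be $\mathcal C_k$ together with the seed columns. Emptiness says that the restricted dual over $\mathcal C_k$ has optimum below $\theta_k$. By finite LP duality, the restricted primal over $\mathcal C_k$ is exactly feasible and has value below $\theta_k$; the seeds guarantee feasibility.

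Minimality of $k$ means threshold $k-1$ produced a certificate, so $\mathrm{LP}\ge(1-\rho)\theta_{k-1}$. Hence
\[
\theta_k=\theta_{k-1}+\rho\le\frac{\mathrm{LP}}{1-\rho}+\rho.
\]
If $k=0$, the bound is trivial. If no threshold reports emptiness, the last threshold exceeds $U_0$ and the seed solution already suffices.

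Finally, I solve the restricted LP exactly in polynomial time and take a basic optimal solution. It has at most $m$ positive variables and polynomial encoding length.

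\textbf{Main obstacle.} The delicate step is the middle case of the test. The oracle's approximation must translate into either a genuine violated column or a scaled dual certificate, never a "grey zone" answer that is neither. Separately, emptiness of the cut-restricted polytope must be exact, not approximate; otherwise finite duality does not give exact primal feasibility.

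I will first try to handle exactness by running ellipsoid in the variables $y$ with the rational cuts $A_C^Ty\le c_C$ themselves, not the scaled ones. A run that ends without a certificate then proves emptiness of a genuine finite LP. The volume bound needs the box and the encoding bounds, which is why the scan is additive rather than a binary search.
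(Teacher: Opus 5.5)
Your proposal follows essentially the same route as the paper. It uses the additive scan $\beta_k=k\rho$ and one ellipsoid run per threshold, in which the approximate oracle yields either a genuine violated column or the scaled dual point $(1-\rho)y$. It then takes the first rejected threshold and applies finite LP duality to the seed-plus-cut columns, with the box justified by weak duality against the seed solution.

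The one step to tighten is your claim that a volume-terminated plain ellipsoid run proves the cut polytope empty. It only shows that the polytope is not full-dimensional, which is why the paper invokes the lower-dimensional-safe exact procedure. Here the slip is harmless: since all $c_C>0$ and $A_C\ge0$, the cut region without the constraint $b^Ty\ge\theta_k$ has nonempty interior. So non-full-dimensionality still forces the restricted optimum to be at most $\theta_k$, and your bound holds with $\le$ in place of $<$.
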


\begin{proof}
The seed solution gives $\operatorname{LP}\le U_0$.  For an objective
threshold $\beta$, run the standard exact rational ellipsoid feasibility
procedure in the affine hull of the current system~\cite{grotschel1981};
this lower-dimensional-safe version either finds a feasible point or returns
a finite rational Farkas certificate supported on queried inequalities.  Run
it in the box
\[
 0\le y_j\le U_0/b_j,
 \qquad b^Ty\ge\beta,\qquad A_C^Ty/c_C\le1
 \quad\text{for every configuration }C.
\]
Include all seed-column inequalities from the outset.  At a queried $y$, an
oracle score above one supplies a genuine violated column and a valid cut.
If the returned score is at most one, every true score is at most
$1/(1-\rho)$, so $(1-\rho)y$ is feasible for the full dual; call this
approximate acceptance.

Consequently rejection is impossible when $\beta\le\operatorname{LP}$,
whereas acceptance is impossible when
$\beta>\operatorname{LP}/(1-\rho)$.  The oracle may behave
nonmonotonically in the grey interval, so scan
\[
 N=\left\lceil U_0/\rho\right\rceil,
 \qquad
 \beta_k=k\rho,
 \qquad k=0,1,\ldots,N.
\]
If a rejection occurs, the first rejected threshold $\bar\beta$ is no larger
than the first grid point above $\operatorname{LP}/(1-\rho)$ and hence
\[
 \bar\beta\le\operatorname{LP}/(1-\rho)+\rho.
\]
If no threshold is rejected, the last threshold is at least $U_0$ and was
approximately accepted.  Thus
\[
 (1-\rho)\beta_N\le\operatorname{LP},
 \qquad
 U_0\le\beta_N\le\frac{\operatorname{LP}}{1-\rho}.
\]
The seed solution therefore already satisfies the claimed bound; return an
exact basic optimum of the seed-column LP.

Otherwise let $S$ contain the seed columns and every genuine column cut from
the rejected run.  Exact rational ellipsoid feasibility returns a finite
infeasibility certificate consisting of those genuine cuts, the box, and
$b^Ty\ge\bar\beta$.  Thus the dual restricted to $S$ has no point of value at
least $\bar\beta$ in the displayed box.  This box loses nothing: weak duality
against the seed solution gives
$b^Ty\le U_0$ for every seed-feasible dual point, and nonnegativity then
gives $y_j\le U_0/b_j$.  Finite rational LP duality therefore yields an
exactly feasible primal on $S$ of value below $\bar\beta$.  Solving this
finite LP exactly and taking a basic optimum leaves at most $m$ positive
columns.  There are $O(U_0/\rho)$ ellipsoid runs, every cut has polynomial
encoding length, and standard rational basic-solution bounds give polynomial
encoding length for the recovered primal.
\end{proof}

In LP~\eqref{eq:weighted-config-lp}, singleton columns cover each retained
large type and one empty configuration declared at the highest level covers
small volume fractionally.  Assigning each large item to its singleton and
all small volume to copies of the empty column gives
$U_0=n(1+L)$.  Since $L$ is fixed and $\rho=\tau$, both
$\lceil U_0/\rho\rceil$ and the box encodings are polynomial in the input
length and $1/\varepsilon$.

For the oracle used here, fix a declared level $\ell$ and put
$\Lambda_\ell=\sum_{p\le\ell}z_p$.  Its unnormalized numerator is
\[
 \Lambda_\ell+
 \max\left\{\sum_t a_t(y_t-\Lambda_\ell s_t):
              \sum_t a_ts_t\le1,\ w_t\le W_\ell\right\}.
\]
  Divide this level's returned score by $c_\ell=1+W_\ell$, and choose the best
  normalized score over all declared levels.  Negative modified profits may
  be omitted.  The multiplicity of a type within one configuration is at most
$1/\tau$ because every retained large size is at least $\tau$; binary
expansion reduces bounded multiplicities to zero-one knapsack with polynomial
size.  A knapsack FPTAS gives the required $(1-\rho)$ oracle.  The nonnegative
baseline $\Lambda_\ell$ preserves the same multiplicative direction.

\section{Deferred Proofs for the Poisson Half-Size Frontier}
\label{app:stochastic-half}

\subsection{Adjacent matching and the offline rate}

Delete zero-saving matching edges.  Among maximum-weight matchings, let $i$
be the leftmost matched vertex and suppose it is paired with $j>i+1$.  If
$i+1$ is unmatched, replace $\{i,j\}$ by $\{i,i+1\}$.  If $i+1$ is paired
with $k>j$, replace the crossing edges by $\{i,i+1\}$ and $\{j,k\}$; each
new edge is no longer than its corresponding old edge.  If
$i+1<k<j$, positivity of the outer edge makes the truncation inactive, and
replacing $\{i,j\},\{i+1,k\}$ by $\{i,i+1\},\{k,j\}$ changes total saving by
$2(a_k-a_{i+1})\ge0$.  Thus some optimum contains $\{i,i+1\}$; remove these
vertices and repeat.  This proves existence, not uniqueness, of an adjacent
optimum and justifies~\eqref{eq:stoch-offline-dp}.

For the stationary saving increment $Z$, let $F$ be its cdf.  For
$0\le z<1$, the recursion~\eqref{eq:stoch-saving-chain} gives
\[
 1-F(z)=\int_{[0,1-z]}
  \bigl(1-e^{-\lambda(1-z-v)}\bigr)\,dF(v).
\]
Substitution yields the atom and density in
\eqref{eq:stoch-stationary-law}.  A gap at least one resets the chain to zero
with probability $e^{-\lambda}$ independently of the past, proving uniqueness
and geometric forgetting.  Hence
$\E Z=\lambda/[2(1+\lambda)]$.  The ergodic theorem gives the saving per item,
while $N_T/T\to\lambda$ almost surely and in $L^1$.  Random-index composition
with $\OPT_T=N_T-\sum_{i\le N_T}Z_i$ proves
\eqref{eq:stoch-opt-rate}; $0\le\OPT_T/T\le N_T/T$ supplies uniform
integrability for the $L^1$ passage.

\subsection{Policy-level immediate-pair normalization}

Fix a private seed, making the policy deterministic.  Within one simultaneous
epoch, put the unique item pending from earlier epochs first, if it exists,
and order the new identities by the convention of
Section~\ref{sec:model}.  Consecutive pairs in this virtual scan are the
canonical pair opportunities.  Order all opportunities by the global rank
of their second item.  A violation occurs when the policy does not seal the
prescribed pair at that epoch.  Before the first violation, the canonical
rule leaves at most one item from an earlier epoch pending.  Hence, if the
first violated pair is $i,j$ at time $t$, every other item available at that
point arrived at time $t$; in particular every later shadow partner of $i$
or $j$ has arrival time at least $t$.

Define a one-violation repair $\Phi(Q)$.  It follows $Q$ until this first
violation, seals $\{i,j\}$ at $t$, and thereafter maintains a shadow
execution of $Q$ on the same arrivals and seed.  Shadow bins unrelated to
$i,j$ are copied unchanged.  If $i,j$ later share one shadow bin, omit that
bin; when it closes, the physical and shadow pending identity sets coincide.
Otherwise let their distinct shadow bins close at $u,v$, put
\[
 s=\min\{u,v\},\qquad m=\max\{u,v\},
\]
and omit both bins.  Retain their zero, one, or two partners as the shadow
actions occur, and at time $m$ pack those partners as nothing, a singleton,
or a pair.  They have all arrived by $m$, and two half-size partners fit.
The physical and shadow pending identity sets then coincide, so all later
shadow actions can again be copied.

The repair is causal: at $t$ it always performs the prescribed pair, while
the relation of the shadow bins and the values $u,v$ are learned only as the
shadow execution unfolds.  Admissibility makes both shadow closure times
finite on every finite continuation.  If $i,j$ share a shadow bin, moving it
to $t$ cannot increase cost.  Otherwise relabel so that $a_i\le a_j=t$.
In the two-partner and one-partner cases, repaired cost minus shadow cost is
respectively
\[
 (t-\min\{a_x,a_y\})+(t-s)\le0,
 \qquad
 (t-a_x)+(t-s)\le0,
\]
because $a_x,a_y\ge t$ and $s\ge t$.  With no partner, one setup is removed
and delay does not increase, so the improvement is at least one.  Thus
$\Phi(Q)$ is pathwise no more expensive than $Q$ and can introduce a new
violation only at a strictly larger second-item rank.

Set $Q_0=Q$ and $Q_{r+1}=\Phi(Q_r)$.  On an input of $n$ arrivals, after at
most $n$ nontrivial repairs no violation remains.  To define one policy on
the entire input tree, fix a finite causal history $h$ and complete it by
silence.  The action trace of $Q_r$ through $h$ stabilizes after finitely
many repairs.  Each $Q_r$ is causal, so its trace through $h$ is independent
of the chosen continuation; consequently the stabilized trace is also
continuation-independent.  Define $\widehat Q$ on $h$ by this trace.
Compatible histories give compatible stabilized traces, so $\widehat Q$ is
causal.  On every finite input it agrees through final cleanup with a finite
iterate $Q_r$, hence is admissible, satisfies the canonical rule, and has
cost at most that of $Q$.  Applying the construction seedwise proves
Lemma~\ref{lem:stoch-normalization} for randomized policies.  Under the
standard-Borel convention of Section~\ref{sec:model}, the transformation also
preserves joint measurability.  Indeed, the first violation is selected from
the countable ordering of pair opportunities, shadow closure times are
measurable stopping times, and every operation in one repair is Borel.  Thus
every $Q_r(U,h)$ is jointly measurable.  For each finite history its action
trace is eventually constant in $r$, so the stabilized trace is the pointwise
limit of measurable traces and is jointly measurable as well.

\subsection{Universal drift lower bound}

Let $\mathcal G_{k-1}$ be the sigma-field generated by the master seed and the
complete arrival/action history through the arrival of cycle leader $k$.
Expose the policy's private randomness on the no-arrival continuation and let
$S_k$ be its hypothetical service age.  Then $S_k$ is
$\mathcal G_{k-1}$-measurable, while subsequent Poisson increments are
independent of $\mathcal G_{k-1}$.  With
$q_k=\E[e^{-\lambda S_k}\mid\mathcal G_{k-1}]$, cycle cost $C_k$ and
interleader time $D_k$ satisfy
\[
 \E[C_k\mid\mathcal G_{k-1}]=1+\frac{1-q_k}{\lambda},
 \qquad
 \E[D_k\mid\mathcal G_{k-1}]=\frac{2-q_k}{\lambda}.
\]
These identities give~\eqref{eq:stoch-drift-low} and
\eqref{eq:stoch-drift-high}.  Moreover, for fresh independent
rate-$\lambda$ exponentials $X_k,Y_k$,
$C_k\le1+X_k$ and $D_k\le X_k+Y_k$, uniformly over the timer law.  Centered
increments therefore have uniformly bounded conditional second moments.

Let $L_k$ be the time of cycle leader $k$ on an infinite Poisson continuation,
let
\[
 \begin{aligned}
 K(T)&=\max\{k:L_k\le T\} &&\text{if this set is nonempty},\\
 K(T)&=0 &&\text{otherwise},\\
 R_T&=\sum_{k\le K(T)}C_k.
 \end{aligned}
\]
with $R_T=0$ when no leader has arrived.  At the start of cycle $k$, the
indicator $\mathbf 1\{L_k\le T\}$ is measurable.  Since $K(T)\le N_T$ and the
cycle increments have uniformly bounded conditional second moments,
localization followed by $L^2$ convergence gives, for either candidate rate
$g$,
\[
 \E\!\left[\sum_{k\le K(T)}(C_k-gD_k)\right]\ge0.
\]
On $K(T)\ge1$, interleader times telescope as
\[
 \sum_{k\le K(T)}D_k=L_{K(T)+1}-L_1\ge T-L_1.
\]
Moreover, deleting arrivals after $T$ cannot make the final cycle led by time
$T$ cheaper: if its next infinite-stream arrival would precede its timer, the
truncated policy instead waits for that timer.  Hence the finite-input cost
is at least $R_T$.  Since
\[
 T-\E[(T-L_1)\mathbf 1\{L_1\le T\}]
 =\E[\min\{L_1,T\}]\le1/\lambda,
\]
every admissible causal policy satisfies
\[
 \E\ALG_T\ge gT-g/\lambda,
 \qquad
 \liminf_{T\to\infty}\frac{\E\ALG_T}{T}\ge
 \begin{cases}
  \lambda,&\lambda\le1,\\
  (\lambda+1)/2,&\lambda\ge1.
 \end{cases}
\]

\subsection{Increasing timers and terminal cleanup}

Assume $\lambda>1$, put $g=(\lambda+1)/2$, and use
$\tau_k=(2/\lambda)\log(k+1)$.  Then $q_k=(k+1)^{-2}$ and
\[
 \delta_k:=\E[C_k-gD_k\mid\mathcal G_{k-1}]
 =\frac{\lambda-1}{2\lambda}(k+1)^{-2},
\]
so $\sum_k\delta_k<\infty$.  Define
\[
 \xi_k=C_k-gD_k-\delta_k,
 \qquad M_0=0,
 \qquad M_n=\sum_{k\le n}\xi_k.
\]
Then $(M_n)$ is a martingale.  Predictability of
$\mathbf 1\{L_k\le T\}$, the uniform conditional second-moment bound, and
$K(T)\le N_T$ give by localization and martingale isometry
\[
 \E[M_{K(T)}]=0,
 \qquad \E[M_{K(T)}^2]=O_\lambda(T),
 \qquad \E|M_{K(T)}|=O_\lambda(\sqrt T).
\]
With the convention above,
$\sum_{k\le K(T)}D_k=L_{K(T)+1}-L_1$ also for $K(T)=0$, because both sides
are then zero.  The exact telescope is
\[
 R_T-gT=M_{K(T)}+\sum_{k\le K(T)}\delta_k
       +g\bigl(L_{K(T)+1}-T-L_1\bigr).
\]
The first-leader mean is $1/\lambda$, while the overshoot
$L_{K(T)+1}-T$ is dominated in expectation by two fresh exponential gaps.
Consequently
\[
 \E R_T=gT+O_\lambda(1),
 \qquad
 \E|R_T/T-g|=O_\lambda(T^{-1/2}).
\]
Only the last cycle changes when arrivals after $T$ are deleted, and
\[
 0\le\ALG_T-R_T\le\tau_{N_T+1}.
\]
Jensen's inequality yields
$\E\tau_{N_T+1}\le(2/\lambda)\log(\lambda T+2)$.  Hence
\eqref{eq:stoch-alg-rate} holds, including its $L^1$ assertion.  For
$\lambda\le1$, the zero timer gives $\ALG_T=N_T$ and the corresponding result
directly.  Every timer in the dense-regime sequence is finite, so the policy
is valid on every finite input and never uses the horizon.

\subsection{Uniform integrability of the realized ratio}

There is at most one output bin per cycle, so setup cost is at most $N_T$.
Before the last cycle, each delay is at most the gap to the next cycle leader;
these gaps telescope to at most $T$.  Only the final residual item can wait
after $T$, and its delay is at most $\tau_{N_T+1}$.  Thus
\[
 \ALG_T\le N_T+T+\tau_{N_T+1},
 \qquad \OPT_T\ge N_T/2\quad(N_T>0).
\]
For $N_T\sim\operatorname{Pois}(\mu)$, $\mu=\lambda T$, split at
$N_T=\mu/2$.  Above the split, $T/N_T\le2/\lambda$; below it,
$T/N_T\le T$ and a Poisson Chernoff bound gives probability at most
$e^{-\mu/8}$.  Since $\Pr(N_T>0)\to1$,
\[
 \sup_{T\text{ large}}\E[(T/N_T)^2\mid N_T>0]<\infty.
\]
Also $\tau_{n+1}/n$ is bounded for integers $n\ge1$.  The displayed
pathwise envelope therefore gives a uniform conditional $L^2$ bound for
$\ALG_T/\OPT_T$.  Joint rate convergence yields convergence in probability
of the sample ratio, and uniform integrability upgrades it to the conditional
expectation in Theorem~\ref{thm:stoch-half-frontier}.  The two separate
$L^1$ cost-rate limits give the ratio of expectations.

\section{Exact Computation and Falsification Framework}
\label{sec:computation}

Our proofs are analytic; computation is used to discover witnesses and reject
false structural claims.  The released scripts and exact regression command
are documented in the accompanying reproducibility manifest.  The offline
oracle enumerates the subset assigned
to the first remaining labelled item and memoizes the residual mask.  It
computes~\eqref{eq:offline-partition} exactly in $O(3^n)$ time and reconstructs
an optimal partition.

The finite-game solver works on a rational time grid.  At every epoch, the
adversary reveals a complete arrival batch before the algorithm selects an
arbitrary pending subset and an arbitrary feasible partition of that subset.
Its silent mode supplies no stop signal: termination is represented by empty
epochs until the algorithm has served every item.  The finite-to-continuous
transfer is quantitative.  Normalize the grid spacing to one and scale the
setup cost to $D>0$.  Fix a seed and maintain a virtual execution of the
continuous policy on the same grid-timed arrivals.  After revealing the full
batch at epoch $k$, simulate the virtual execution through $k$, feeding that
batch to the shadow only at virtual time $k$, and combine
all as-yet-unrealized nonempty virtual services with times in $(k-1,k]$ into
one finite ordered grid trace, represented by one partition-valued
macro-action in the solver and using the disjoint union of their original bin
partitions.
Every item remains physically pending until its virtual bin is copied.  This
is a causal shadow construction, including when several services round to the
same epoch.  Each virtual bin is served once at the earliest grid epoch weakly
after its virtual service time.  A tied endpoint may reveal extra items but
does not invalidate the copied subset.  Thus every bin cost increases by at
most one; because its original cost is at least $D$, the rounded cost is at
most $(1+1/D)$ times the original.  For randomized policies the construction
also preserves joint measurability: ceiling a stopping time to the integer
grid and grouping a finite labelled action trace are measurable maps.

To prevent a finite horizon from leaking an end signal, suppose the game
releases at most $N$ items and no item after time $H$.  For a target ratio
$r$, choose a grid-compatible $L$ and append public empty epochs through
$H+L$, where
\[
 L>(rN-1)D.
\]
If, after the action at epoch $H+L$, an item is still pending, then it is
either never served, violating admissibility, or its eventual bin costs at
least $D+L>rND$, while immediate singleton service gives $\OPT\le ND$.
This nonempty terminal state is certified analytically at ratio $r$; the
declaration is not an event revealed to the online policy.  Otherwise the
execution completes inside the represented silent tail.  Hence, for
deterministic policies, a certified adaptive-grid lower bound $r$ yields the
continuous-time lower bound $r/(1+1/D)$, with no termination signal supplied
to the algorithm.  The rounding construction is seedwise, but an
oblivious-randomized lower bound additionally requires one fixed finite input
distribution and averaging over the independent seed.  The adaptive pure
minimax value reported by the finite-game solver is not such a certificate;
the separate finite-support Yao solver is the appropriate computational route
for that quantifier order.

The same tools rejected several tempting algorithms.  In particular, the
oldest-age flush-all rule has ratio tending to $5/2$ on items
\[
  (1,0),\quad(1/2,1),\quad(1/2,1+\varepsilon).
\]
A more elaborate oldest-frontier lexicographic rule has a six-item family
whose ratio tends to $8/3$.  These failures illustrate why a proof may not
assume that all currently pending items should be packed at one service.

For randomized algorithms, exact integration over threshold breakpoints is
used as a sanity check for Algorithm~\ref{alg:cors}.  Random rational tests do
not replace the proof of Theorem~\ref{thm:main-randomized}; they verify event
conventions and the pathwise decomposition before formal use.

\section*{Disclosure of AI-Assisted Research and Writing}
The authors used OpenAI language-model systems, including Codex, extensively
and substantively as interactive assistants during this project.  These
systems contributed to proof exploration, counterexample search,
formalization, adversarial checking, literature organization, LaTeX drafting,
and editorial revision.  The human authors selected the research questions
and modeling choices, evaluated and verified the generated arguments, and
take full responsibility for every claim and for the final manuscript.

\bibliographystyle{plain}
\bibliography{references}

@article{dooly2001tcp,
  author  = {Dooly, Daniel R. and Goldman, Sally A. and Scott, Stephen D.},
  title   = {On-line analysis of the {TCP} acknowledgment delay problem},
  journal = {Journal of the ACM},
  volume  = {48},
  number  = {2},
  pages   = {243--273},
  year    = {2001},
  doi     = {10.1145/375827.375843}
}

@misc{bhore2026general,
  author        = {Bhore, Sujoy and Paw{\l}owski, Micha{\l} and Umboh, Seeun William},
  title         = {Online {TCP} Acknowledgment under General Delays},
  year          = {2026},
  eprint        = {2604.13428},
  archivePrefix = {arXiv},
  primaryClass  = {cs.DS}
}

@article{karlin2001dynamic,
  author  = {Karlin, Anna R. and Kenyon, Claire and Randall, Dana},
  title   = {Dynamic {TCP} Acknowledgment and Other Stories about
             {$e/(e-1)$}},
  journal = {Algorithmica},
  volume  = {36},
  number  = {3},
  pages   = {209--224},
  year    = {2003},
  doi     = {10.1007/s00453-003-1013-x}
}

@inproceedings{seiden2000guessing,
  author    = {Seiden, Steven S.},
  title     = {A guessing game and randomized online algorithms},
  booktitle = {Proceedings of STOC},
  pages     = {592--601},
  year      = {2000},
  doi       = {10.1145/335305.335385}
}

@article{ahlroth2013,
  author  = {Ahlroth, Lauri and Schumacher, Andr{\'e} and Orponen, Pekka},
  title   = {Online bin packing with delay and holding costs},
  journal = {Operations Research Letters},
  volume  = {41},
  number  = {1},
  pages   = {1--6},
  year    = {2013},
  doi     = {10.1016/j.orl.2012.10.006}
}

@inproceedings{azar2019clustering,
  author    = {Azar, Yossi and Emek, Yuval and van Stee, Rob and
               Vainstein, Danny},
  title     = {The Price of Clustering in Bin-Packing with Applications to
               Bin-Packing with Delays},
  booktitle = {Proceedings of SPAA},
  pages     = {1--10},
  year      = {2019},
  doi       = {10.1145/3323165.3323180}
}

@article{epstein2021,
  author  = {Epstein, Leah},
  title   = {On bin packing with clustering and bin packing with delays},
  journal = {Discrete Optimization},
  volume  = {41},
  pages   = {100647},
  year    = {2021},
  doi     = {10.1016/j.disopt.2021.100647}
}

@article{balogh2019,
  author  = {Balogh, J{\'a}nos and B{\'e}k{\'e}si, J{\'o}zsef and
             D{\'o}sa, Gy{\"o}rgy and Sgall, Ji{\v r}{\'\i} and van Stee, Rob},
  title   = {The optimal absolute ratio for online bin packing},
  journal = {Journal of Computer and System Sciences},
  volume  = {102},
  pages   = {1--17},
  year    = {2019},
  doi     = {10.1016/j.jcss.2018.11.005}
}

@article{simchilevi1994,
  author  = {Simchi-Levi, David},
  title   = {New worst-case results for the bin-packing problem},
  journal = {Naval Research Logistics},
  volume  = {41},
  number  = {4},
  pages   = {579--585},
  year    = {1994},
  doi     = {10.1002/1520-6750(199406)41:4<579::AID-NAV3220410409>3.0.CO;2-G}
}

@inproceedings{karmarkar1982,
  author    = {Karmarkar, Narendra and Karp, Richard M.},
  title     = {An Efficient Approximation Scheme for the One-Dimensional
               Bin-Packing Problem},
  booktitle = {Proceedings of FOCS},
  pages     = {312--320},
  year      = {1982},
  doi       = {10.1109/SFCS.1982.61}
}

@article{lawler1979,
  author  = {Lawler, Eugene L.},
  title   = {Fast Approximation Algorithms for Knapsack Problems},
  journal = {Mathematics of Operations Research},
  volume  = {4},
  number  = {4},
  pages   = {339--356},
  year    = {1979},
  doi     = {10.1287/moor.4.4.339}
}

@article{grotschel1981,
  author  = {Gr{\"o}tschel, Martin and Lov{\'a}sz, L{\'a}szl{\'o} and
             Schrijver, Alexander},
  title   = {The Ellipsoid Method and Its Consequences in Combinatorial
             Optimization},
  journal = {Combinatorica},
  volume  = {1},
  number  = {2},
  pages   = {169--197},
  year    = {1981},
  doi     = {10.1007/BF02579273}
}

\end{document}